\numberwithin{equation}{section}
\newtheorem{Theorem}{Theorem}[section]
\newtheorem{Corollary}[Theorem]{Corollary}
\newtheorem{Lemma}[Theorem]{Lemma}
\newtheorem{Proposition}[Theorem]{Proposition}
 { \theoremstyle{definition}
\newtheorem{Definition}[Theorem]{Definition}}
\def\ba#1\ea{\begin{align}#1\end{align}}
\newcommand{\prt}{\partial}
\newcommand{\dd}{\mathrm{d}}
\newcommand{\N}{{\mathbb N}}
\newcommand{\NN}{\mathbb{N}}
\newcommand{\cG}{{\cal G}}
\def\cS{\mathcal{S}}
\begin{document}

\allowdisplaybreaks

\renewcommand{\thefootnote}{$\star$}

\newcommand{\arXivNumber}{1512.02087}

\renewcommand{\PaperNumber}{056}

\FirstPageHeading

\ShortArticleName{The Multi-Orientable Random Tensor Model, a Review}

\ArticleName{The Multi-Orientable Random Tensor Model,\\ a Review\footnote{This paper is a~contribution to the Special Issue on Tensor Models, Formalism and Applications. The full collection is available at \href{http://www.emis.de/journals/SIGMA/Tensor_Models.html}{http://www.emis.de/journals/SIGMA/Tensor\_{}Models.html}}}

\Author{Adrian TANASA~$^{\dag\ddag\S}$}

\AuthorNameForHeading{A.~Tanasa}

\Address{$^\dag$~Univ. Bordeaux, LaBRI, UMR 5800, 351 cours de la Lib\'eration, 33400 Talence, France}
\EmailD{\href{mailto:adrian.tanasa@ens-lyon.org}{adrian.tanasa@ens-lyon.org}}
\URLaddressD{\url{http://www.labri.fr/perso/atanasa/}}

\Address{$^\ddag$~IUF, 1 rue Descartes, 75231 Paris Cedex 05, France}
\Address{$^\S$~H.~Hulubei National Institute for Physics and Nuclear Engineering,\\
\hphantom{$^\S$}~P.O.~Box MG-6, 077125 Magurele, Romania}

\ArticleDates{Received December 08, 2015, in f\/inal form June 10, 2016; Published online June 15, 2016}

\Abstract{After its introduction (initially within a group f\/ield theory framework) in [Tanasa~A., \textit{J.~Phys.~A: Math.
 Theor.} \textbf{45} (2012), 165401, 19~pages, arXiv:1109.0694], the multi-orientable (MO) tensor model grew over the last years into a solid alternative of the celebrated colored (and colored-like) random tensor model. In this paper we review the most important results of the study of this MO model: the implementation of the $1/N$ expansion and of the large~$N$ limit ($N$ being the size of the tensor), the combinatorial analysis of the various terms of this expansion and f\/inally, the recent implementation of a double scaling limit.}

\Keywords{random tensor models; asymptotic expansions}

\Classification{05C90; 60B20; 81Q30; 81T99}

\rightline{\it Dedicated to Vincent Rivasseau's 60th birthday anniversary}

\renewcommand{\thefootnote}{\arabic{footnote}}
\setcounter{footnote}{0}

\section{Introduction}

Random tensor models are a natural generalization in dimension higher than two of the celebrated (bidimensional) matrix models (see, for example, the review~\cite{DiFrancesco:1993nw}). Tensor models were f\/irst proposed in the nineties~\cite{t1, t2} but unfortunately did not draw at that moment a particular attention within the mathematical physics community.

A considerable revival of interest for these tensor models appeared with the proposition of the so-called {\it colored} tensor model \cite{color}.
Let us recall here that this proposition was actually made within the related group f\/ield theoretical (GFT) framework
(for general references on GFT, see the book \cite{oriti} or the review articles \cite{freidel, oriti2012} or \cite{BaratinOriti}).
 Shortly after, the implementation of the $1/N$ expansion for this colored model \cite{largeN, GR} largely contributed to this revival of interest. The ge\-ne\-ral term of this expansion was thoroughly analyzed, from a purely combinatorial point of view, in~\cite{GS}. This analysis allowed for the implementation of the double scaling limit mechanism for the colored tensor model. In parallel, this mechanism has been implemented for a closely related model, in \cite{DGR}, using this time purely quantum f\/ield theoretical (QFT) techniques (namely, the intermediate f\/ield method).

Moreover, after adding an appropriate Laplacian operator in the action of this type of models, several renormalizability studies have been made. The f\/irst perturbative renormalizable tensor model was the Ben Geloun--Rivasseau tensor model \cite{BGR}. The combinatorics of the renormalizability of this model has been expressed in a Hopf algebraic setting in \cite{RT-CK}.
Several other models have been proved renormalizable, models def\/ined this time in closer relation to the initial GFT framework~-- see the thesis \cite{teza-Sylvain} and references within. Various $\beta$-function computations have also been made, see \cite{ART, Sylvain-AIHPD} and references within. Moreover, in~\cite{ART}, the combinatorics of the Dyson--Schwinger equation of a particular such tensor model has been analyzed from a purely algebraic point of view.

For the sake of completeness, let us also mention that some relations between tensor models and matrix models have been investigated in \cite{BC2-AIHPD2}. Moreover, relations between tensor models and meanders have been studied in \cite{BC1-AIHPD2}. Finally, some relations between the counting of tensor model observables and branched covers of the two-sphere have been proved in \cite{BGR-AIHPD}.

For various reviews on tensor models, we point the interested reader to the reviews \cite{GR-Sigma, rev-riv} and \cite{praa}.

Getting back to colored (and colored-like) tensor models, we point out that they have a major drawback: an important class of Feynman tensor graph are discarded, by the very def\/inition of these models. This drawback is softened when working with the multi-orientable (MO) tensor model, where it was proved that a larger class of Feynman graph is kept.

This is, in our opinion, the main motivation for the study of the MO model. One can remark that this motivation is not of geometric or topological nature, but instead it has a strong QFT f\/lavor. Indeed, the QFT philosophy is to carefully look over all classes of graphs, and not to chose just some classes of Feynman graphs which are easier to study. For example, if in Moyal QFT one discards the non-planar sector, the celebrated UV/IR mixing phenomena is lost.

With this QFT optics in mind, we do not consider that the MO model is the most general tensor model one should study, while keeping interesting properties such as the large $N$ expansion or the double scaling limit (see also the perspectives described in the last section of this review). Nevertheless, we consider that, with respect to the colored and colored-like models, the MO model is an interesting step ahead in this direction.

The MO model was def\/ined, again within a GFT setting, in \cite{original} (see also \cite{praa-mo} for a short review). The $1/N$ expansion and the large $N$ limit have been implemented in \cite{DRT}. The sub-dominant term of this expansion has been studied in \cite{RT}. A thorough combinatorial analysis of the general term of this expansion, has been done in \cite{FT}. This has then led to the recent implementation of the double scaling mechanism \cite{GTY}. All these results are presented within the following four sections. The last section of this review is dedicated to some concluding remarks and to a list of perspectives for future work.

\section{Def\/inition of the model}

In this section, we give the def\/inition of the MO random tensor model. This follows \cite{DRT}.

Let $\phi_{ijk}$, $i,j,k = 1, \ldots, N$, be the components of a three index complex tensor $\phi$.
The action of the MO tensor model writes
\begin{gather}\label{eq:S}
 S[\phi] = \sum_{ijk} \phi_{ijk} \bar{\phi}_{ijk} - \frac{\lambda}{2} \sum_{\genfrac{}{}{0pt}{}{ijk}{lmn}}\phi_{ijk} \bar{\phi}_{mlk} \phi_{mjn} \bar{\phi}_{iln}.
\end{gather}
Note that the $1/2$ factor multiplying the coupling constant $\lambda$ takes into consideration the symmetries of the vertex.

The partition function of the model
\begin{gather*}
 Z = \int \mathcal{D}[\phi] e^{ -S[\phi] } , \qquad \mathcal{D}[\phi] = \prod_{ijk} \frac{{\rm d}\phi_{ijk} \, {\rm d}\bar{\phi}_{ijk}}{2\pi\imath} ,
\end{gather*}
writes in perturbation theory as a sum over Feynman MO tensor graphs.
From a combinatorial point of view, the partition function is a generating function of graphs to whom one associates a certain weight (which is actually the Feynman amplitude below).

The vertices of the MO graphs are four valent, the edges are oriented from $\phi$ to $\bar \phi$ and
the orientations alternate around a vertex. The edges can be represented as three parallel strands (one
for each index of the tensor) and the vertices as the intersection of four half edges such that
every pair of half edges shares a~strand. This orientation can also be seen in the following way. One labels the four corners of the vertex with alternating signs `$+$' and `$-$', see Fig.~\ref{graf}. An edge of the graph then connects a `$+$' to a `$-$' sign. In Fig.~\ref{graf}, an example of a vacuum MO tensor graph is given.

\begin{figure}[htb]
\centering
\includegraphics[scale=0.6]{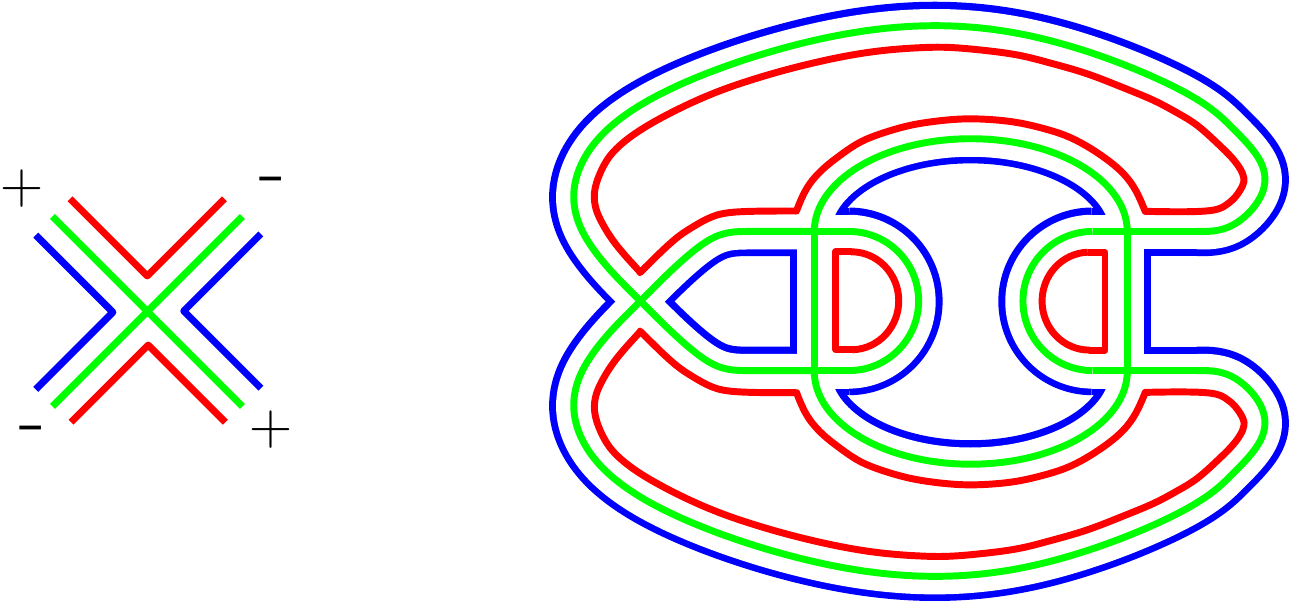}
\caption{The MO vertex and an MO vacuum Feynman graph.\label{graf}}
\end{figure}

The name {\it multi-orientable} was given in \cite{original}, because this type of model enjoins two ``kinds of orientability'':
\begin{enumerate}\itemsep=0pt
\item Orientability at the level of the propagator: no twists are allowed between the various strands. In the original GFT literature, the name ``orientable'' was given for models with this property (see for example~\cite{fgo} and GFT references within).
\item Orientability at the level of the vertex. This type of vertex is called ``orientable'' in the Moyal non-commutative QFT literature (see, for example, the review \cite{rev-riv-ncqft} and references within). Within this framework, the vertex is pictured as in Fig.~\ref{vertexncqft}.
\end{enumerate}

\begin{figure}[htb]
\centering
\includegraphics[scale=0.2]{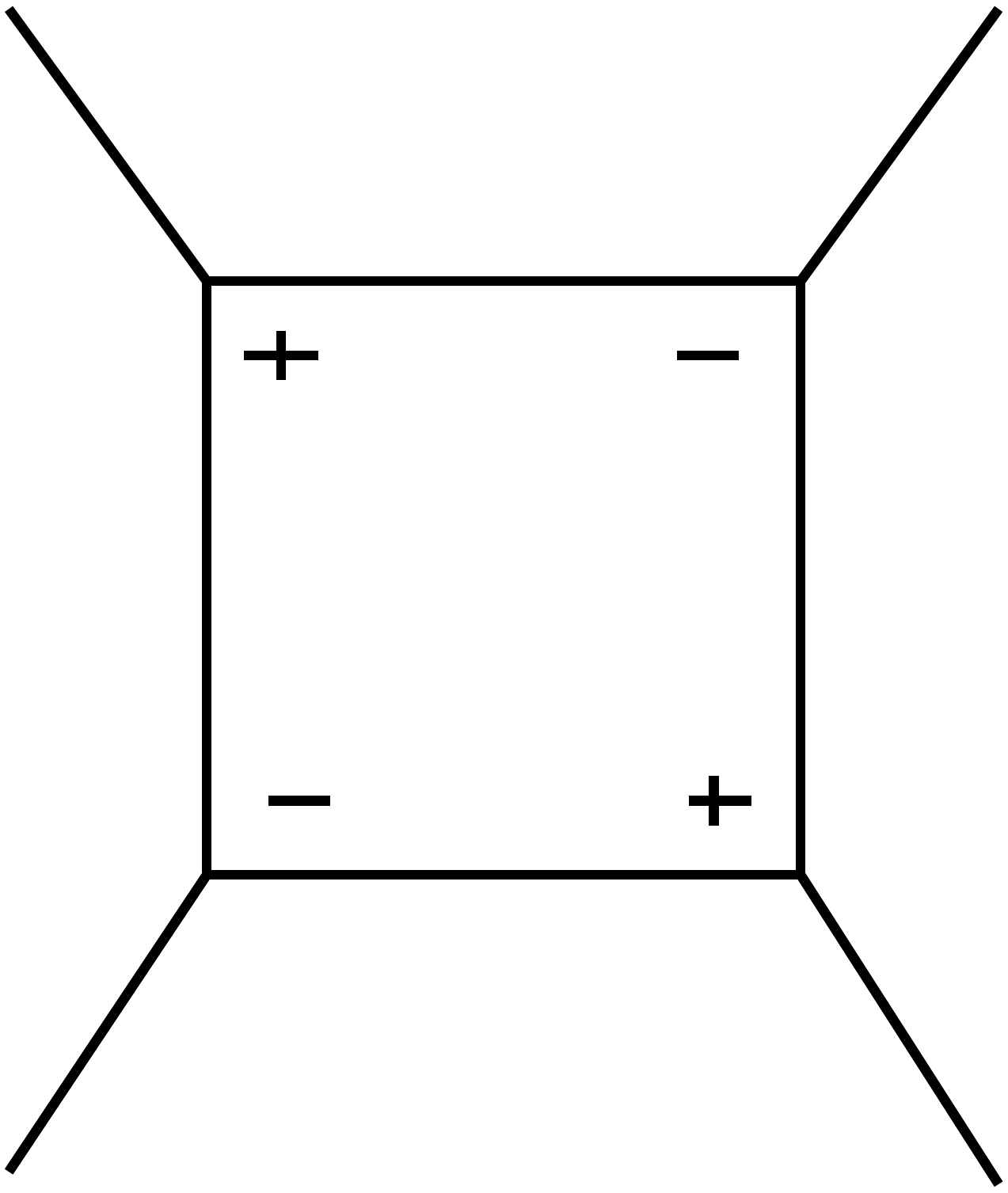}
\caption{A dif\/ferent representation (without strands) of the MO vertex.\label{vertexncqft}}
\end{figure}

One can then prove that the MO action \eqref{eq:S} leads to a set of Feynman graphs which is strictly larger than the set of Feynman graphs of the colored model (for more details, see \cite{original}) or colored-like models.
Let us recall here that colored models allow only for Feynman graphs where each edge has a color, usually labeled in 3D, from $0$ to $3$. At each vertex, one has a half-edge of each of these four colors~-- the graph is bipartite. Moreover, a face of the graph is formed by edges of exactly two colors.
The MO model allows for all these Feynman graphs. Moreover, its perturbative expansion allows as well for the following supplementary classes of Feynman graphs:
\begin{enumerate}\itemsep=0pt
\item MO graphs which are not edge-colorable (with at most four colors) but are not allowed by the colored model action,
\item MO graphs which are edge-colorable (with at most four colors) but are not allowed by the colored model action,
\item MO graphs which are not bipartite.
\end{enumerate}
Note that edge-colorability implies bipartiteness~-- this is a standard result of graph theory.
Examples of such MO tensor graphs are given in Fig.~\ref{planartadtwistsun}. On the left one has a double tadpole graph, which is an MO graph which is not edge-colorable. On the right one has a graph which is edge
colorable, using four colors, $0,\ldots, 3$, but does not occur in colorable models.

\begin{figure}
\centering
\includegraphics[scale=0.99]{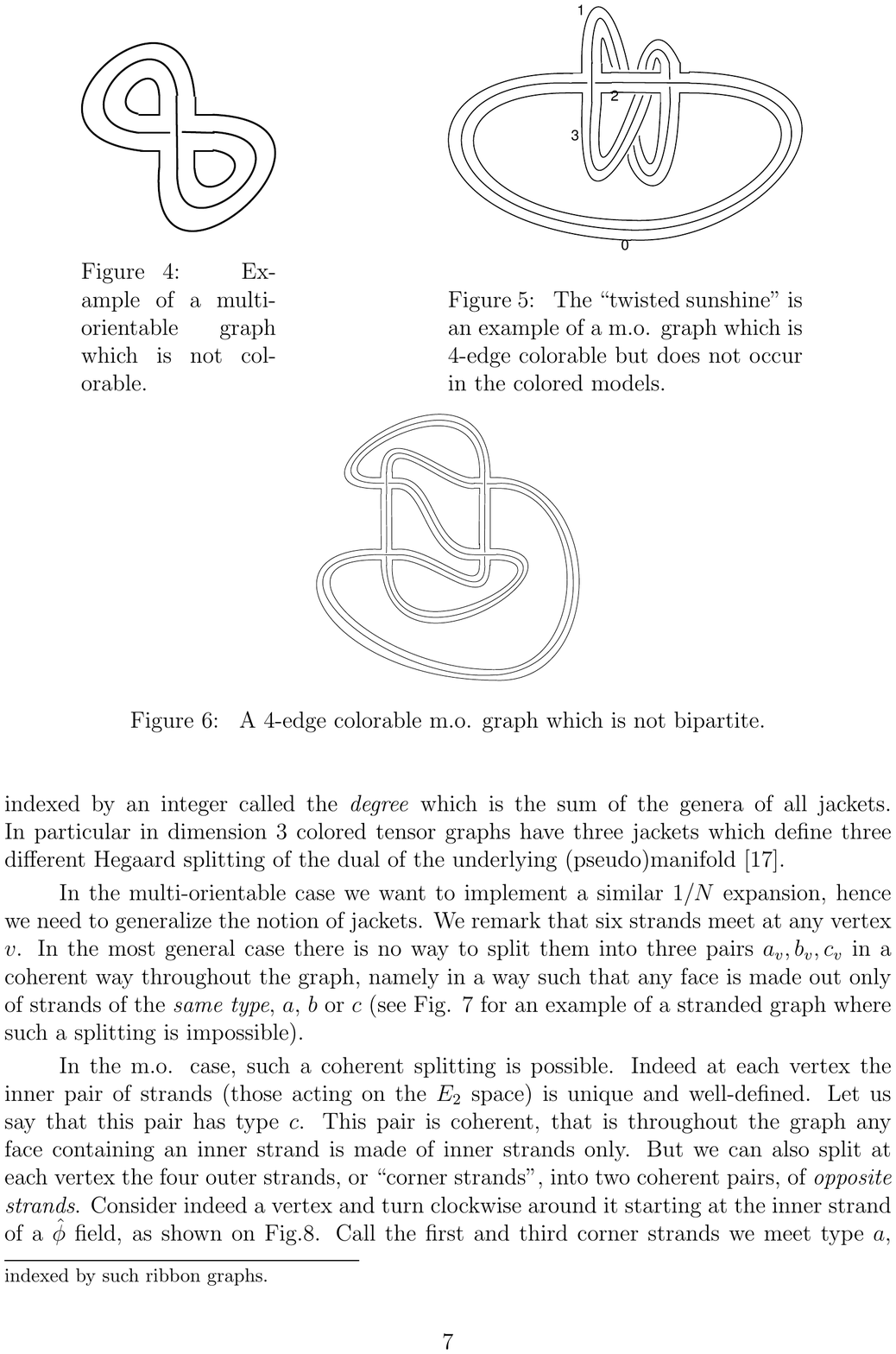}
\caption{Examples of MO tensor graphs which do not occur in the colorable framework.\label{planartadtwistsun}}
\end{figure}

The strands are divided into three classes:
the ones in the middle of the edges, called the~$S$ (for straight) strands,
the ones on the right (with respect to the $\phi\to \bar \phi$ orientation) of the edges, called the $R$ (for right) strands
and the ones on the left of the edges, called the $L$ (for left) strands. At a vertex the~$S$ ($R$ or $L$) strands
only connect to~$S$ ($R$ or~$L$) strands.

Note that MO tensor graphs are in one-to-one correspondence with maps. This is obtained by getting rid of the $S$ strand. Thus, one can use strand-less graphical representations of MO tensor graphs (such as the one given in Fig.~\ref{vertexncqft} of in Section~\ref{sec:fusy}).

To any MO graph one can associate three canonical ribbon graphs, called the {\it jackets},
obtained by erasing throughout the graph all the strands in the same class ($S$, $L$ or $R$).
For the graph of Fig.~\ref{graf}, this leads to the three jackets represented in Fig.~\ref{jachete}.
\begin{figure}[htb]
\centering
\includegraphics[scale=.5]{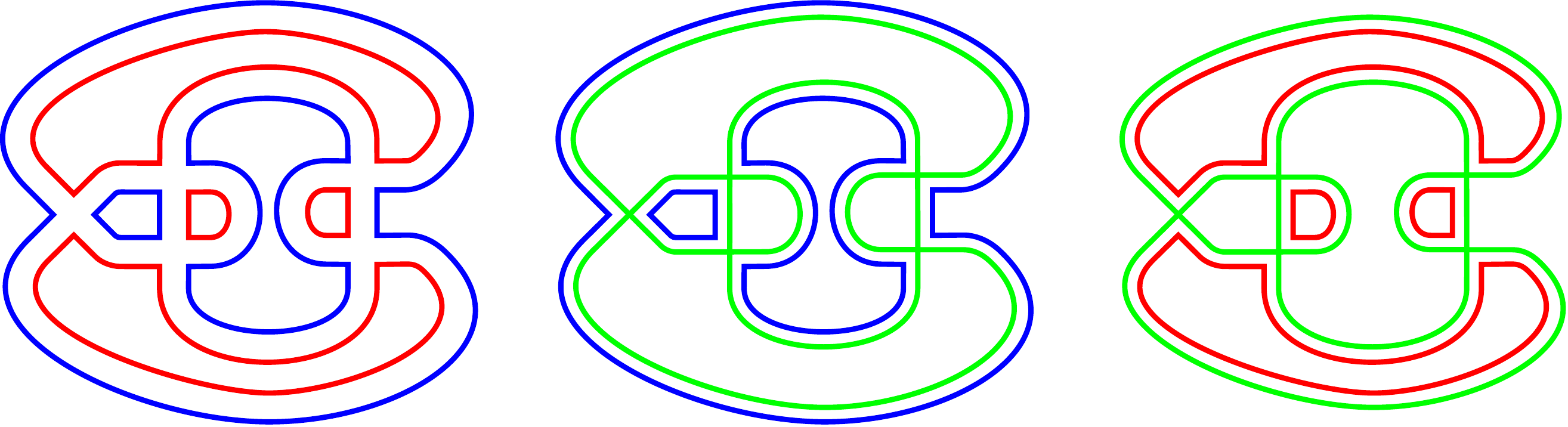}
\caption{The three jackets associated to the graph of Fig.~\ref{graf}.\label{jachete}}
\end{figure}

The fact that this algorithm always leads, for an MO graph, to a ribbon graph was proved in Proposition~4.1 of~\cite{DRT}.
Let us also emphasize that this does not hold for general non-MO graph
(see Fig.~\ref{algotadface} for such a counterexample).

\begin{figure}[htb]
\centering
 \includegraphics[scale=0.18]{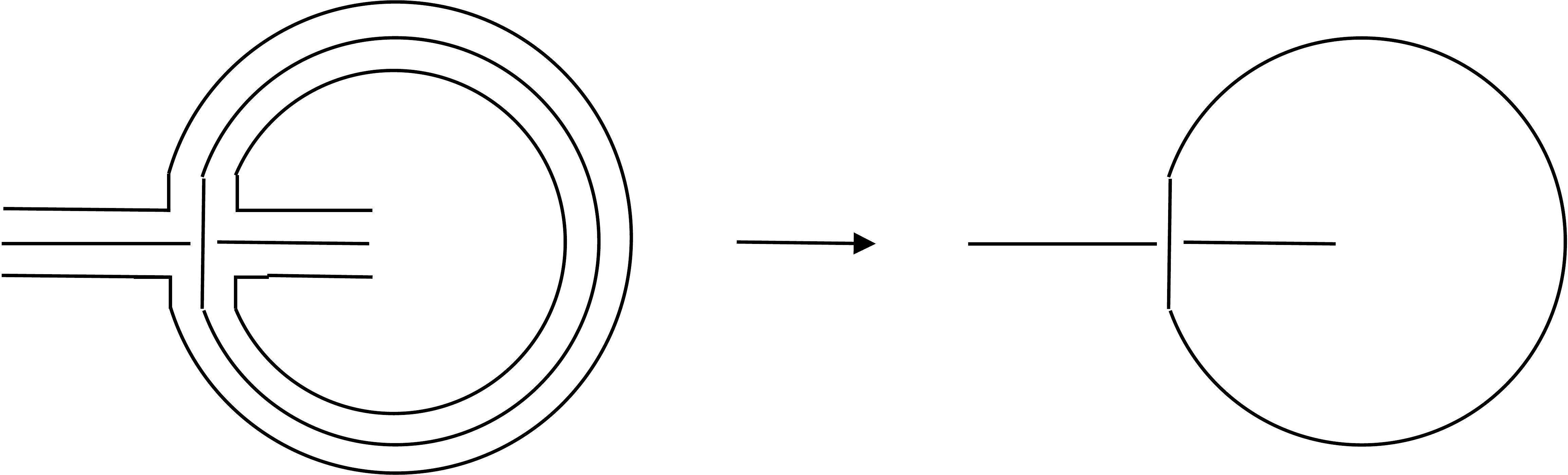}
 \caption{Deletion a pair of opposite corner strands in a non-MO graph.\label{algotadface}}
\end{figure}

We further call the {\it canonical genus} of an MO graph the genus of the jacket obtained by erasing the straight faces (we also call this particular jacket the {\it canonical jacket}). Thus, a~planar MO tensor graph is an MO graph with vanishing canonical genus.

Let us emphasize here that the jackets of the MO tensor graph can be non-orientable. This is a fundamental dif\/ference between the MO model and the colored-like models.

\begin{Definition}The {\it degree} $\delta(\cG)$ of a connected MO graph $\cG$ is the half sum of the non
orientable genera of its three jackets
\begin{gather*}
 \delta(\cG)=\frac{1}{2}\left( k_{{\cal J}_1} + k_{{\cal J}_2} + k_{{\cal J}_3} \right),
\end{gather*}
where the parameters $ k_{{\cal J}_i}$ ($i=1,2,3$) are the non-orientable genera of the three jackets of the graph.
\end{Definition}

As a direct consequence of this def\/inition, one concludes that the degree is a positive half integer.

\section[The $1/N$ expansion and the large $N$ limit]{The $\boldsymbol{1/N}$ expansion and the large $\boldsymbol{N}$ limit}

In this section we implement the $1/N$ expansion and we study the large $N$ limit of the model. In the f\/irst two subsections, we analyze the expression of the Feynman amplitudes and the leading order of the $1/N$ expansion. This follows the original article~\cite{DRT}.
In the third subsection, we analyze the next-to-leading order of the $1/N$ expansion.
This follows the original article~\cite{RT}.
In the last subsection, we study the leading and next-to-leading order series of the model. This follows again the original article \cite{RT}.

\subsection[Feynman amplitudes; the $1/N$ expansion]{Feynman amplitudes; the $\boldsymbol{1/N}$ expansion}\label{expansion}

Let us now organize the free energy series according to powers of $N$. Since each face corresponds to a closed cycle of Kronecker $\delta$ functions, each face contributes with a factor $N$. The Feynman amplitude of an MO graph then writes
\begin{gather*} %\label{1stamplitude}
 A= \lambda^{V} (k_N)^{-V} N^{F},
\end{gather*}
where $V$ is the number of vertices of the graph, $F$ is the number of faces of the graph and $k_N$ is a rescaling constant.
 We choose this rescaling $k_N$ to get the same divergence degree for the leading graphs at any order.
We f\/irst count the faces of a graph using the previously def\/ined jackets~$\mathcal{J}$.
Using the Euler characteristic formula, one has
\begin{gather} \label{facejacket}
f_{\mathcal{J}}= e_{\mathcal{J}}-v_{\mathcal{J}}-k_{\mathcal{J}}+2,
\end{gather}
where $k_{\mathcal{J}}$ is the non-orientable genus of the jacket ${\mathcal{J}}$ (see above).
Since each jacket is a connected vacuum ribbon graph, one has: $e_{\mathcal{J}} = 2v_{\mathcal{J}}$.
Let us recall here that the numbers of vertices (resp.~edges) of a~jacket
$\mathcal{J}$ of a graph is the same than the numbers of vertices (resp.~edges) of the graph.
Since each graph has three jackets and each face of a graph occurs in two jackets,
summing~\eqref{facejacket} over all the jackets of a graph leads to
\begin{gather} \label{magica}
F=\frac{3}{2}V+3-\sum_{\mathcal{J}} \frac{k_{\mathcal{J}}}{2} =\frac{3}{2}V+3-\delta.
\end{gather}
The amplitude rewrites as
\begin{gather*} %\label{2ndamplitude}
 A= \lambda^{V} (k_N)^{-V} N^{\frac{3}{2}V+3-\delta}.
\end{gather*}
To to get the same divergence degree for the leading graphs at any order, we choose the scaling constant~$k_N$
to be equal to $N^{\frac{3}{2}}$. The amplitude f\/inally writes as
\begin{gather} \label{3rdamplitude}
A = \lambda^{V}N^{3-\delta}.
\end{gather}
Note that, as in the case of random matrices, random tensor models (such as the MO model described in this paper), have a purely combinatorial expression. This comes from the fact that the propagator in the action~\eqref{eq:S} is, from a QFT point of view, trivial: no Laplacian operator is present in the quadratic part. As already mentioned in the Introduction, adding an appropriate Laplacian operator to the action allows for more involved expression of the associated Feynman amplitudes and for renormalizability studies (see again the thesis~\cite{teza-Sylvain}).

Using now the expression \eqref{3rdamplitude} for the Feynman amplitude we can rewrite the free energy $E$ as a formal series in $1/N$:
\begin{gather*} %\label{newserie}
E=\sum_{\delta\in \N/2}C^{[\delta]}(\lambda)N^{3-\delta},
\end{gather*}
where
\begin{gather*}
C^{[\delta]}(\lambda)=\sum_{\mathcal{G}, \delta(\mathcal{G})=\delta} \frac{1}{s(\mathcal{G})}\lambda^{v_{\mathcal{G}}}.
\end{gather*}

\subsection[The large $N$ limit -- the leading order (melonic graphs)]{The large $\boldsymbol{N}$ limit -- the leading order (melonic graphs)}

Having implemented the $1/N$ expansion, one easily sees that the leading order in the large~$N$ limit (i.e., the limit $N\to\infty$) is given by the tensor graphs satisfying the condition
\begin{gather}\label{conditia}
\delta=0.
\end{gather}
In this subsection we identify these tensor graphs and we show that they are the so-called {\it melonic graphs}. These graphs are obtained by insertions of a fundamental two-point melonic subgraphs in the so-called {\it elementary melon} (see Fig.~\ref{insertion}).
\begin{figure}[htb]
\centering
\includegraphics[scale=0.25]{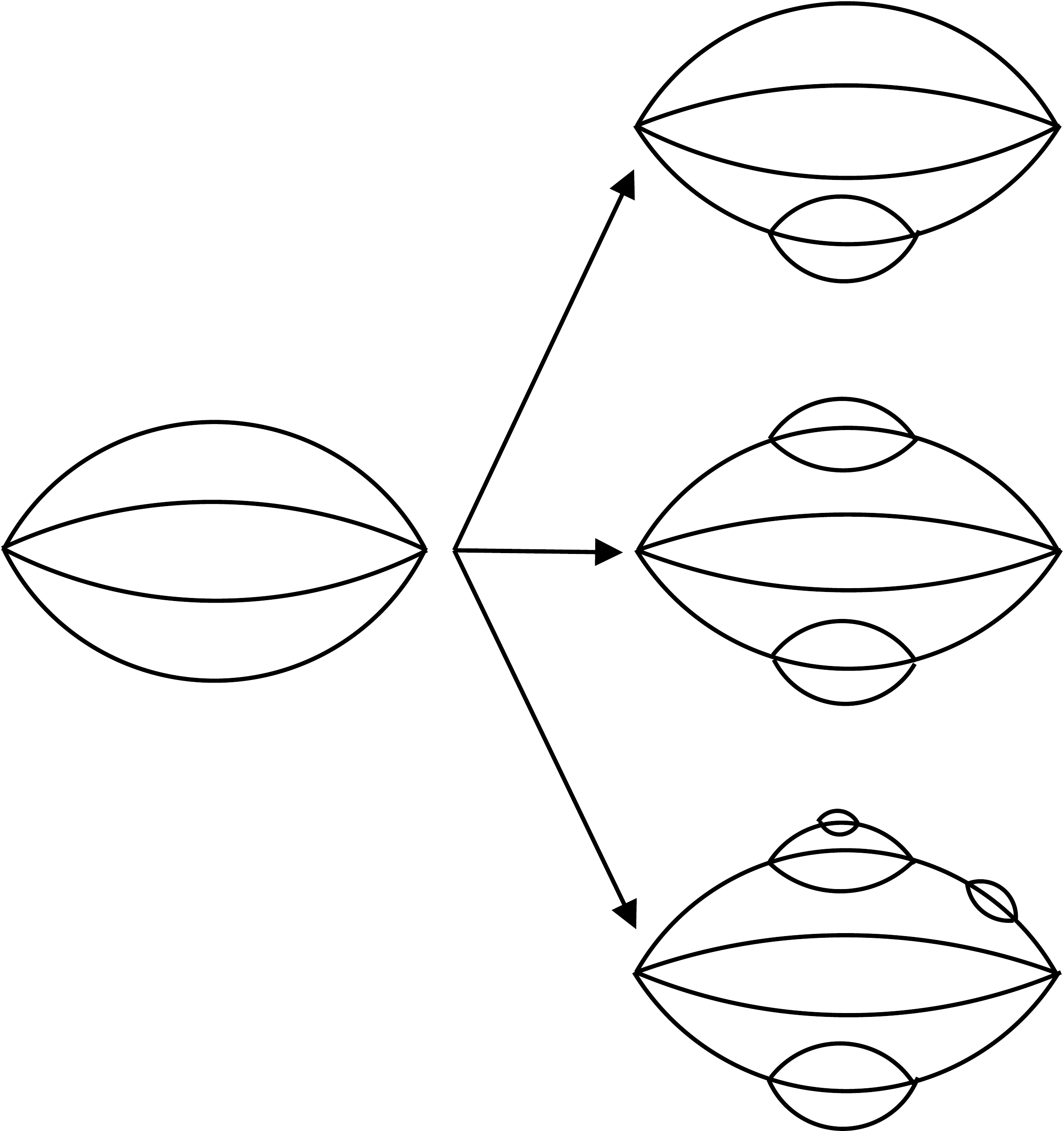}
\caption{Insertions of melonic subgraphs in the elementary melon graph.\label{insertion}}
\end{figure}
Another example of a graph obtained in this way is given in Fig.~\ref{exemplu} (example where we have used this time the stranded representation of graphs).

\begin{figure}[htb]
\centering
\includegraphics[scale=0.15]{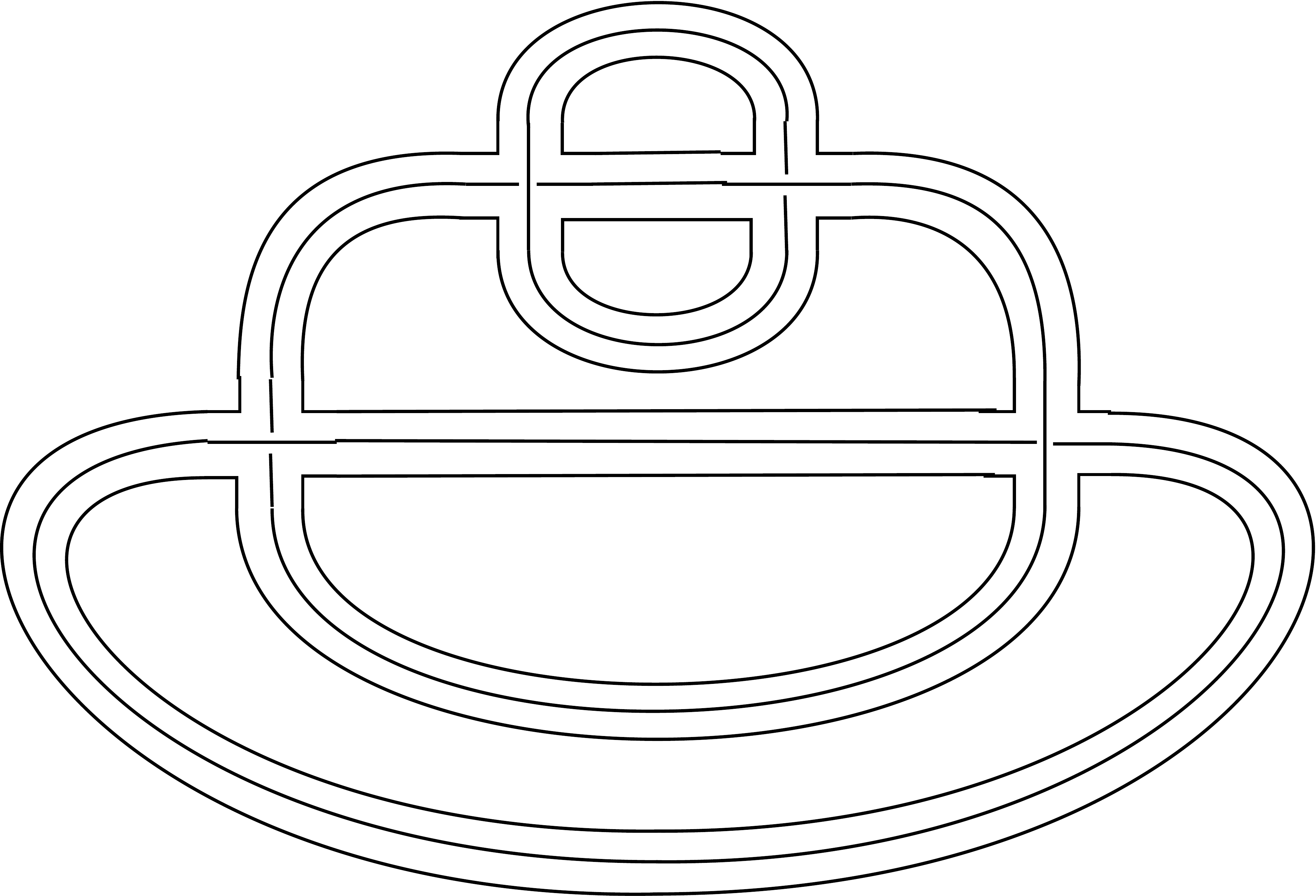}
\caption{An example of a melonic tensor graph (using the stranded representation).}\label{exemplu}
\end{figure}

Note that one can prove the following crucial fact: melonic insertions preserve the degree. This can be checked by carefully counting the number of faces and vertices and by using the expression~\eqref{magica} for the degree.

One can f\/irst prove that condition~\eqref{conditia} cannot be satisf\/ied if the graph is non-bipartite:
\begin{Proposition}\label{nbp}
A non-bipartite MO graph has at least one non-orientable jacket and thus its degree satisf\/ies the inequality: $\delta \ge \frac{1}{2}$.
\end{Proposition}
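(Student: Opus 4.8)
The plan is to first dispatch the ``thus'' clause and then concentrate on the genuinely combinatorial assertion that a non-bipartite MO graph possesses a non-orientable jacket. For the former, recall that each jacket is a closed (vacuum) ribbon graph, so by the classification of surfaces its parameter satisfies $k_{\mathcal{J}}=2g_{\mathcal{J}}\ge 0$ when $\mathcal{J}$ is orientable and $k_{\mathcal{J}}\ge 1$ when $\mathcal{J}$ is non-orientable. Hence, once a single non-orientable jacket is produced, one of the three summands in $\delta(\cG)=\frac{1}{2}(k_{\mathcal{J}_1}+k_{\mathcal{J}_2}+k_{\mathcal{J}_3})$ is at least $1$ while the others are non-negative, giving $\delta(\cG)\ge\frac{1}{2}$ immediately. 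Everything therefore reduces to the implication that non-bipartiteness forces some jacket to be non-orientable, which I would establish by exhibiting, for a suitable cycle, a M\"obius sub-ribbon inside one of the three jackets.

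Next I would encode orientability as a $\mathbb{Z}_2$-obstruction along cycles. For a fixed jacket $\mathcal{J}$, choose a local orientation of every vertex-disc; each edge then inherits a twist sign $\tau_{\mathcal{J}}(e)\in\{0,1\}$ recording whether its band preserves or reverses these orientations, and $\mathcal{J}$ is orientable if and only if $\sum_{e\in C}\tau_{\mathcal{J}}(e)\equiv 0 \pmod 2$ for every cycle $C$, i.e.\ the regular neighbourhood of $C$ in $\mathcal{J}$ is an annulus rather than a M\"obius band. The task is thus to read these twist contributions off the MO data. The decisive local input is the structure of the four-valent vertex: its corners carry alternating signs $+,-,+,-$, and the three strand-pairings realise the three partitions of its four half-edges into two pairs; exactly one of these --- the pairing of opposite corners --- joins two equal-sign corners and is consequently drawn as a crossing, while the other two join opposite-sign corners. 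For each jacket and each way a cycle can enter and leave a vertex (equivalently, for each pair of half-edges it uses) I would tabulate the resulting contribution to $\tau$, keeping careful track of the fact that the $R/L$ designation is defined along the $\phi\to\bar\phi$ orientation and is therefore reversed between an outgoing ($+$) and an incoming ($-$) half-edge.

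The heart of the argument is then a parity identity. Walking around a cycle $C$, the entry/exit pattern at each visited vertex is governed by the traversal directions of the two incident edges, and I would show that the three twist totals assemble so that the triple $(\tau_{\mathcal{J}_1}(C),\tau_{\mathcal{J}_2}(C),\tau_{\mathcal{J}_3}(C))$ vanishes precisely when $C$ has even length; concretely, at least one jacket always registers the parity $|C|\bmod 2$. Granting this, an odd cycle --- which exists exactly because $\cG$ is non-bipartite --- forces $\tau_{\mathcal{J}_i}(C)=1$ for some $i$, so that jacket contains a M\"obius neighbourhood and is non-orientable, completing the reduction and hence the proposition. The main obstacle I anticipate is precisely the bookkeeping behind this identity: fixing consistent orientation conventions at the vertices, correctly assigning the per-edge twist in each jacket --- in particular diagnosing when the crossed, equal-sign pairing does or does not introduce a half-twist --- and verifying the identity uniformly over all entry/exit types and for degenerate cycles such as the self-loops coming from tadpoles, where the length-one odd cycle must already be detected by a non-orientable jacket.
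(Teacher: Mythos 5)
Your overall route is sound, and it is in essence the argument of the literature: the review states this proposition without proof, deferring to~\cite{DRT}, and the proof there likewise reduces non-bipartiteness to the existence of an odd cycle and detects a M\"obius band inside a jacket. Your dispatch of the ``thus'' clause is correct (one non-orientable jacket gives $k_{\mathcal{J}}\ge 1$, the other two give $k_{\mathcal{J}}\ge 0$, hence $\delta\ge\tfrac12$), and the $\mathbb{Z}_2$-twist formulation of orientability of a ribbon graph is the right tool. For the record, here is how the tabulation you defer actually comes out. Call a vertex-crossing of a cycle \emph{straight} if it uses two opposite (equal-sign) half-edges, and otherwise an L-turn or an R-turn according to the corner it uses. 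A straight crossing reverses the direction of traversal relative to the $\phi\to\bar\phi$ edge orientations, while turns preserve it; hence \emph{every} closed cycle has an even number of straight crossings. In the jacket retaining the L and R strands, the side-swapping crossings are exactly the straight ones, so that jacket never twists (it is always orientable: it is the underlying map); in the jacket retaining $\{L,S\}$ the side-swapping crossings are exactly the R-turns, and in the jacket retaining $\{R,S\}$ exactly the L-turns. Therefore the three twist parities of a cycle $C$ sum to the number of L-turns plus the number of R-turns, which is congruent to $|C|$ modulo~$2$; an odd cycle is thus orientation-reversing in (exactly) one of the two non-canonical jackets, proving the proposition.

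Two caveats on your write-up. First, the biconditional ``the triple vanishes precisely when $C$ has even length'' is false in the direction you do not use: a double edge whose two crossings are an L-turn at one endpoint and an R-turn at the other (these are exactly the double edges that do \emph{not} form dipoles, shown in the right part of Fig.~\ref{fig:dipoles}) is an even cycle that is M\"obius in \emph{both} non-canonical jackets. The correct statement is the mod-$2$ sum rule above; luckily, only the true direction (all parities vanish $\Rightarrow$ $|C|$ even, equivalently odd $\Rightarrow$ some jacket twists) is needed for the proposition. Second, your guess that the crossed, equal-sign pairing is the source of the half-twist is wrong for the two jackets that retain the S strands: there the straight crossing contributes \emph{no} twist, and it is the turn through the corner whose strand class has been \emph{deleted} that twists. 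Since this tabulation is the entire mathematical content of the proof, your proposal as written remains a viable plan rather than a complete proof until it is carried out.
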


In order to identify the dominant graphs among the bipartite MO graphs, one can prove the following results:
\begin{Proposition} If $\mathcal{G}$ is a bipartite, vacuum graph of degree zero, then $\mathcal{G}$ has a face with two vertices.
\end{Proposition}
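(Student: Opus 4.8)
The plan is to reduce the statement to a single spherical jacket together with a parity constraint coming from bipartiteness. First I would record the consequence of $\delta(\mathcal{G})=0$: since the degree is the half-sum of the three non-negative numbers $k_{\mathcal{J}_i}$, vanishing degree forces $k_{\mathcal{J}_1}=k_{\mathcal{J}_2}=k_{\mathcal{J}_3}=0$, so by the Euler relation \eqref{facejacket} every jacket has vanishing genus and is therefore a ribbon graph drawn on the sphere. I would then fix one jacket $\mathcal{J}$, say the canonical one obtained by erasing the $S$-strands; as a combinatorial map it is $4$-regular (each vertex keeps its four half-edges), connected, vacuum, and spherical, with $v_{\mathcal{J}}=V$, $e_{\mathcal{J}}=2V$, and, by \eqref{facejacket}, $f_{\mathcal{J}}=V+2$.

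Next I would extract the standard curvature identity for a $4$-regular spherical map. Writing $d(\phi)$ for the number of corners bordering a face $\phi$ of $\mathcal{J}$ and using $\sum_{\phi} d(\phi)=2e_{\mathcal{J}}=4V$, Euler's formula gives
\begin{gather*}
 \sum_{\phi}\big(4-d(\phi)\big)=4f_{\mathcal{J}}-4V=4(V+2)-4V=8 .
\end{gather*}
Thus the total ``defect'' is strictly positive, so $\mathcal{J}$ must possess faces of degree strictly less than $4$.

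The crucial extra input, and the step that makes the argument close, is bipartiteness. A face of $\mathcal{G}$ (hence of $\mathcal{J}$) is a closed strand cycle whose projection to the underlying graph is a closed walk, its length being exactly the number of edges it traverses. In a bipartite graph every closed walk has even length, so every face of $\mathcal{J}$ has even degree; moreover bipartiteness forbids self-loops, so $d(\phi)\ge 2$ for every $\phi$. In the displayed sum the summands $4-d(\phi)$ are then even and at most $2$, equal to $2$ precisely when $d(\phi)=2$ and non-positive otherwise. Since the total is $8>0$, at least one (in fact at least four) face has $d(\phi)=2$. Such a degree-$2$ face of $\mathcal{J}$ is a length-$2$ face of $\mathcal{G}$, bordered by two distinct edges joining the same pair of vertices, which by the absence of self-loops are distinct; this is exactly a face with two vertices, the desired conclusion.

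I would flag that the genuine obstacle is that a bare averaging estimate on face lengths is too weak: the average face degree $4V/(V+2)$ is below $4$, yet this alone does not produce a bigon, as the octahedron (a $4$-regular sphere all of whose faces are triangles) shows. The whole force of the argument rests on coupling the spherical Euler identity with the even-degree constraint from bipartiteness; it is precisely this parity restriction that collapses the admissible defects onto degree-$2$ faces and excludes octahedron-type configurations. Note also that only one planar jacket is actually needed, so the three-jacket bookkeeping that yields $F_S=F_R=F_L=(V+2)/2$ is convenient but not essential here.
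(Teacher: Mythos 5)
Your proof is correct. For comparison: the review states this proposition without giving a proof, deferring to \cite{DRT}, and the argument there is the global version of the count you perform. Namely, by \eqref{magica} a degree-zero vacuum graph has $F=\tfrac{3}{2}V+3$ faces in total; each of the $2V$ edges carries three strands, so the face lengths sum to $6V$; bipartiteness makes every face length even, hence if no face had length two one would get $6V\geq 4F=6V+12$, a contradiction. Your version localizes this to a single jacket: degree zero forces all three non-orientable genera $k_{\mathcal{J}_i}$ to vanish, the chosen jacket is then a planar $4$-regular map with $V+2$ faces, and the identity $\sum_\phi(4-d(\phi))=8$ combined with the same parity input produces at least four bigons, each of which is a face of $\mathcal{G}$ with two (distinct, by absence of self-loops) vertices. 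The substance of the two arguments is identical -- abundance of faces coming from the Euler relation, plus the parity constraint coming from bipartiteness -- and the differences are organizational. What your variant buys is that only the planarity of one jacket is invoked, so you actually prove the marginally stronger statement that a bipartite vacuum MO-graph with at least one planar jacket has a face with two vertices; what the global count buys is that it reuses the degree formula \eqref{magica} already derived for the $1/N$ expansion, with no need to single out a jacket. Your closing remark is also well taken: the octahedron shows that the averaging bound alone cannot yield a bigon, so the parity step is essential in either formulation.
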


\begin{Proposition} If $\mathcal{G}$ is null degree bipartite vacuum graph, then it contains a three-edge colored subgraph with exactly two vertices.
\end{Proposition}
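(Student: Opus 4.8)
The plan is to upgrade the bigon supplied by the previous proposition into a genuine three-edge colored configuration. First I would invoke that proposition to obtain a face $f$ of $\mathcal{G}$ running through exactly two vertices $v,w$; let $e_1,e_2$ be the two edges it traverses and let $c\in\{S,R,L\}$ be the strand class carrying $f$. I would then record, using the explicit MO vertex wiring, to which corners of $v$ and $w$ the edges $e_1,e_2$ attach: the straight strand joins opposite corners of a vertex, whereas the $R$ and $L$ strands join the two pairs of adjacent corners. Together with bipartiteness --- which makes $v$ a $\phi$-vertex and $w$ a $\bar\phi$-vertex (or vice versa), and so fixes the alternating $+/-$ pattern of the corners --- this determines the local picture around the bigon completely.

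Next I would use $\delta(\mathcal{G})=0$, which forces each non-orientable genus $k_{\mathcal{J}_i}$ to vanish, so that all three jackets are spheres, in particular planar and orientable. Since every face of $\mathcal{G}$ occurs in exactly two of the three jackets, the bigon $f$ appears as a length-two face in two planar ribbon graphs. I would then follow the strands of the two classes other than $c$ through the corners of $v$ and $w$ that lie inside $f$. The aim is to show that planarity and orientability leave these strands no room to escape into the rest of the graph: a length-two face is maximally degenerate in a planar jacket, and combined with the corner identifications from the first step this should force a third edge $e_3$ to join $v$ and $w$, with all three strand classes running straight (untwisted) between the two vertices.

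Finally I would verify that $\{v,w,e_1,e_2,e_3\}$ is the desired three-edge colored subgraph: the absence of twists --- guaranteed by orientability of the jackets --- means that each edge carries its three strands cleanly, so the three edges admit a consistent assignment of three distinct colors, while the single leftover half-edge at each of $v$ and $w$ (the fourth color) connects to the rest of $\mathcal{G}$. The step I expect to be hardest is the middle one: translating the global hypothesis $\delta=0$ into the strictly local conclusion that the non-$c$ strands cannot reconnect elsewhere. The bigon alone only certifies that $e_1$ and $e_2$ are parallel in one strand class; excluding a ``partial melon,'' in which the remaining strands close up far from $v$ and $w$ so that no third parallel edge appears, is precisely where the planarity and orientability of all three jackets must be used simultaneously, rather than jacket by jacket, in concert with the precise MO corner matchings.
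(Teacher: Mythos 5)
Your plan assembles the right ingredients, and in fact follows the same route as the proof this review points to (the review itself gives no argument for this proposition, deferring to \cite{DRT}): start from the two-vertex face $f$ supplied by the previous proposition, observe that $\delta(\mathcal{G})=0$ forces $k_{\mathcal{J}_1}=k_{\mathcal{J}_2}=k_{\mathcal{J}_3}=0$ so that all three jackets are spheres, and then show the strand structure around $f$ must close into a third parallel edge, producing the two-vertex three-edge colored subgraph. The problem is that your middle step --- which is the entire mathematical content of the proposition --- is never carried out. Writing that planarity and orientability ``leave these strands no room to escape'' and ``should force'' a third edge $e_3$ joining $v$ and $w$ is a restatement of what must be proved, not a proof of it. The ``partial melon'' configuration you yourself identify (the non-$f$ strands of $e_1,e_2$ wandering off and closing up far from $v$ and $w$) is excluded by nothing in your text; excluding it requires an actual mechanism, e.g., a case analysis on the type $c\in\{S,L,R\}$ of the bigon (for an $S$-face, $e_1,e_2$ are opposite at both vertices, for $L$/$R$-faces they are adjacent, and the corners and faces to be traced differ in the two cases) combined with a concrete contradiction --- exhibiting a handle or crosscap in one of the jackets, or an Euler-characteristic/face count on a suitable subgraph --- whenever the required half-edge identification fails. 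Since everything else in your proposal is either quotation of earlier results or bookkeeping, what remains is a plan, not a proof.

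Two smaller points. First, in the MO model there are no ``$\phi$-vertices'' and ``$\bar\phi$-vertices'': every vertex is of the mixed type $\phi\bar\phi\phi\bar\phi$, and the alternating $+/-$ labelling of corners holds at every vertex by definition, not as a consequence of bipartiteness. Bipartiteness enters elsewhere: it is what makes the previous proposition available (all faces have even length, no self-loops, so a length-two face has two distinct vertices) and what guarantees the jackets are orientable surfaces. Second, note that the proposition only asserts that \emph{some} three-edge colored subgraph on two vertices exists; your strategy of extending the \emph{given} bigon is the natural (and, in \cite{DRT}, the actual) one, but it carries the full burden of proving that the third edge really lands on the same pair $v,w$ --- precisely the step left open.
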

Using these two propositions, one can prove that the only graphs satisfying condition \eqref{conditia} are the melonic graphs (see \cite{DRT} for details).
This leads to:

\begin{Theorem}
The leading order graphs of the $1/N$ expansion of the MO model \eqref{eq:S} are the melonic ones.
\end{Theorem}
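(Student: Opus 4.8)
The plan is to prove the two inclusions separately: that every melonic graph satisfies condition~\eqref{conditia}, and conversely that every MO graph of vanishing degree is melonic. For the first inclusion I would begin with the \emph{elementary melon}, counting its vertices and faces directly and checking via~\eqref{magica} that $\delta=0$. Invoking the crucial fact recalled above that melonic insertions preserve the degree, it then follows immediately that any graph obtained from the elementary melon by a sequence of such insertions---that is, any melonic graph---also has $\delta=0$.

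The substantial direction is the converse, which I would establish by induction on the number $V$ of vertices. Let $\mathcal{G}$ be a connected vacuum MO graph with $\delta(\mathcal{G})=0$. By Proposition~\ref{nbp}, a non-bipartite graph has $\delta\ge\frac12$, so $\mathcal{G}$ must be bipartite; the base case is the elementary melon, which is melonic by definition. For the inductive step I would apply the two propositions stated just above: the first yields a face of $\mathcal{G}$ with only two vertices, and the second promotes this to a three-edge colored subgraph on exactly two vertices. This two-vertex piece is precisely a fundamental melonic two-point subgraph sitting inside $\mathcal{G}$.

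The key step---and the one I expect to be the main obstacle---is to show that \emph{deleting} this fundamental subgraph (replacing it by a single edge, the inverse operation of a melonic insertion) produces a graph $\mathcal{G}'$ that is again connected, bipartite, vacuum, and of degree zero. This is where the careful bookkeeping enters: from~\eqref{magica} with $\delta=0$ one has $F=\frac32 V+3$, so a degree-preserving reduction removing the two vertices must remove exactly three faces, and the combinatorial content is to verify that the local structure of a fundamental melon accounts for precisely these three faces while leaving connectivity and bipartiteness intact. Granting this, the induction hypothesis shows $\mathcal{G}'$ is melonic, and reinserting the fundamental subgraph exhibits $\mathcal{G}$ as melonic as well. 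Combining the two inclusions gives the theorem; the full combinatorial verification of the reduction step is carried out in~\cite{DRT}.
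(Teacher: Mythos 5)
Your proposal is correct and follows essentially the same route as the paper: both directions rest on the degree-preservation of melonic insertions, Proposition~\ref{nbp} to exclude non-bipartite graphs, and the two propositions producing a two-vertex three-edge colored subgraph whose removal drives an induction on the number of vertices, with the detailed combinatorial bookkeeping deferred to~\cite{DRT} exactly as the paper does. Your identification of the degree-, connectivity-, and bipartiteness-preserving reduction step as the crux matches where the real work lies in~\cite{DRT}.
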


Let us end this subsection by stating two important properties of melonic graphs:
\begin{enumerate}\itemsep=0pt
\item Melonic graphs maximize the number of faces for a given number of vertices.
\item Melonic graphs correspond to a particular class of triangulations of the three-dimensional sphere $S^3$.
\end{enumerate}
Moreover, from a probabilistic point of view, these graphs were proven to correspond to branched polymers (that is, they possess Hausdorf\/f dimension two and spectral dimension $4/3$) \cite{GR-AIHP}.

\subsection[The large $N$ limit -- the next-to-leading order]{The large $\boldsymbol{N}$ limit -- the next-to-leading order}

Using again the $1/N$ expansion of Subsection~\ref{expansion}, one easily sees that the next-to-leading order in the large $N$ limit is given by the tensor graphs satisfying the condition
\begin{gather*}%\label{conditia2}
\delta=\frac 12.
\end{gather*}
In this subsection we identify these tensor graphs and we prove that they are the so-called {\it infinity graphs}. These graphs are obtained by melonic insertions in the double tadpole graph of Fig.~\ref{planartadtwistsun}.

One can then prove (see \cite{RT} or \cite{FT} for details) the following result:

\begin{Theorem}\label{thm:main}
The next-to-leading order graphs of the $1/N$ expansion of the MO model~\eqref{eq:S} are the infinity ones.
\end{Theorem}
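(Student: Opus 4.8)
The plan is to establish the two inclusions separately: every infinity graph has degree $\frac{1}{2}$, and conversely every MO graph of degree $\frac{1}{2}$ is an infinity graph. The first inclusion is the easy one. A direct computation of the three jackets of the double tadpole graph of Fig.~\ref{planartadtwistsun} shows that one of them is a projective plane while the other two are spheres, so that $\sum_{\mathcal{J}} k_{\mathcal{J}} = 1$ and hence its degree equals $\frac{1}{2}$. Since melonic insertions preserve the degree (as recalled in the previous subsection, this follows by counting faces and vertices through~\eqref{magica}), every graph obtained from the double tadpole by such insertions, that is, every infinity graph, again has degree $\frac{1}{2}$.

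For the converse I would first pin down the combinatorial shape of a degree-$\frac{1}{2}$ graph. Writing $\delta = \frac{1}{2}\sum_i k_{\mathcal{J}_i}$ with each non-orientable genus $k_{\mathcal{J}_i}$ a non-negative integer, the condition $\delta = \frac{1}{2}$ forces $\sum_i k_{\mathcal{J}_i} = 1$, so exactly one jacket is a projective plane ($k = 1$) and the other two are planar ($k = 0$). In particular the graph carries a non-orientable jacket. Since a bipartite MO graph has only orientable jackets, whose non-orientable genera are even and thus yield an integer degree, the presence of a projective-plane jacket shows that a degree-$\frac{1}{2}$ graph is necessarily non-bipartite; this is the regime complementary to the leading order, in which Proposition~\ref{nbp} already guarantees $\delta \ge \frac{1}{2}$.

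The heart of the argument is then a reduction mimicking the degree-zero analysis. From $\delta = \frac{1}{2}$ and~\eqref{magica} one obtains $F = \frac{3}{2}V + \frac{5}{2}$, so $V$ is odd, already matching the single vertex of the double tadpole. I would adapt the two propositions used in the melonic case to locate, away from the non-orientable jacket, a two-vertex face together with a two-vertex three-edge-colored subgraph; deleting this two-point melonic subgraph is the inverse of a melonic insertion, keeps the degree equal to $\frac{1}{2}$, and strictly decreases the number of vertices. Iterating this peeling terminates at an irreducible non-bipartite graph of degree $\frac{1}{2}$, which one identifies with the double tadpole by showing it is the unique minimal such graph; running the insertions backwards then exhibits the original graph as an infinity graph.

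The main obstacle I anticipate is precisely this reduction step in the non-bipartite setting. The two auxiliary propositions that produce a two-vertex three-colored subgraph are stated for bipartite, degree-zero graphs, and neither they nor their proofs transfer verbatim once a projective-plane jacket is present. The delicate point is to show that all of the non-orientability is confined to a double-tadpole core and that two-point melonic subgraphs can always be extracted from the complementary, locally bipartite part without ever touching that core, ruling out in particular any other irreducible non-bipartite graph of degree $\frac{1}{2}$. Controlling this localization, for instance by tracking how the deletion of a melon acts on each of the three jackets and checking that the single non-orientable handle is left intact, is where the real work lies.
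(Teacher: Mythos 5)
Your first inclusion is sound: the jacket computation for the double tadpole of Fig.~\ref{planartadtwistsun} (one projective-plane jacket, two spherical ones) is correct, melonic insertions do preserve the degree, and the deduction that a degree-$\tfrac12$ graph has exactly one non-orientable jacket and is therefore non-bipartite (via \eqref{magica} together with the orientability of the jackets of bipartite MO graphs, the converse of Proposition~\ref{nbp}, proved in~\cite{DRT}) is also fine. But the proof stops exactly where the theorem begins. The whole content of the statement is the converse inclusion, and within it the claim you defer: that the only melon-free MO graph of degree $\tfrac12$, i.e., the only possible terminal point of your peeling procedure, is the double tadpole. Your text does not establish this; it names it as ``where the real work lies'' and lists the obstacles (the two bipartite, degree-zero propositions do not transfer; the non-orientability must be shown to be confined to a double-tadpole core). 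An acknowledged missing step is still a missing step: as written, nothing excludes the existence of some other non-bipartite, melon-free graph of degree $\tfrac12$, in which case your iterative melon deletion could terminate on that graph instead, and the theorem would fail.

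The closing arguments in the works the paper points to (\cite{RT} and \cite{FT}) also indicate that the step you postpone is not a routine adaptation of the bipartite case but genuinely different in mechanism. Rather than hunting for a two-vertex face ``away from the non-orientable jacket,'' one exploits short faces directly: a degree-$\tfrac12$ graph is forced to contain a face of length one or two, and the length-one case is handled by the tadpole-removal move recorded in Lemma~\ref{lem:remove_loop}, which erasing the tadpole and its vertex lowers the degree by exactly $\tfrac12$, so the reduced graph has degree zero and is melonic by the leading-order theorem. One must then still argue that every melon of the reduced graph nests around the reinsertion point (otherwise the original graph would not have been melon-free), which is precisely the ``localization'' you flag but do not carry out; it is this nesting analysis that forces the core to be the double tadpole and hence the original graph to be an infinity graph. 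So your proposal is a reasonable strategy outline, consistent in spirit with the published proof, but the classification of the melon-free degree-$\tfrac12$ core, the mathematical heart of Theorem~\ref{thm:main}, is absent, and without it there is no proof.
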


\subsection{Leading and next-to-leading order series}

Following \cite{RT}, we compute in this subsection the radiuses of convergence and the susceptibility exponents of the leading and next-to-leading order series of the MO tensor model.

\subsubsection{Leading order series}

The analysis of the leading order (LO) series of the MO model is identical to the one of the colored model (since one is interested to the same class of graphs, the melonic graphs, and these graphs have the same Feynman amplitude in the two cases).

We denote by $\lambda_{\rm c,LO}$ the LO critical value of the coupling constant $\lambda$. From a mathematical point, $\lambda_{\rm c,LO}$ is the radius of convergence of the LO series in $\lambda$.

The leading order free energy is obtained by summing over the amplitudes of melonic vacuum graphs. We are interested in the asymptotic behavior of this LO series around $\lambda_{\rm c,LO}$

The LO connected two-point function writes
\begin{gather*}
	G_{\rm LO}(\lambda) \sim \operatorname{const} + \left( 1 - \frac{\lambda^2}{\lambda_{\rm c,LO}^2}\right)^{\frac{1}{2}}
\end{gather*}
around the critical value $\lambda_{\rm c,LO}$ of the coupling constant in the leading order. This is related to the behavior of the LO free energy, for which we obtain
\begin{gather*}
	E_{\rm LO}(\lambda) \sim \left( 1 - \frac{\lambda^2}{\lambda_{\rm c,LO}^2}\right)^{2-\gamma_{\rm LO}},
\end{gather*}
with the susceptibility exponent (or the critical exponent), being
\begin{gather*}
\gamma_{\rm LO}=\frac{1}{2}.
\end{gather*}

\subsubsection{Next to leading order series}

In order to study the behavior of the NLO series, we need to study the connected NLO two-point function. The graphs contributing to the connected NLO two-point function can be obtained from the NLO vacuum graphs by cutting any one of the internal lines of an NLO vacuum graph. Thus, we can in a straightforward
manner import the classif\/ication of NLO vacuum graphs obtained in the previous section to the case of connected NLO two-point graphs. We express the NLO
two-point function in terms of the LO two-point function through algebraic identities relating the LO and NLO two-point functions. More specif\/ically, any
two-point function is of the form: bare propagator multiplied by a specif\/ic function. We will denote this function associated to the connected LO two-point function as $G_{\rm LO}$, the function associated to the connected NLO two-point function as $G_{\rm NLO}$, and the function associated to the one-particle-irreducible (1PI) NLO two-point function as $\Sigma_{\rm NLO}$.

The f\/irst identity for the two-point functions illustrated in Fig.~\ref{fig:gnlo}, states that the connected NLO two-point function $G_{\rm NLO}$ (on the l.h.s.\ of the f\/igure) is obtained by gluing connected LO two-point functions $G_{\rm LO}$ on both sides of the 1PI NLO two-point function $\Sigma_{\rm NLO}$ (on the r.h.s.\ of the f\/igure).

This writes
\begin{gather*}
	G_{\rm NLO} = G_{\rm LO}^2 \Sigma_{\rm NLO}.
\end{gather*}

\begin{figure}[htb]\centering
\includegraphics[scale=0.5]{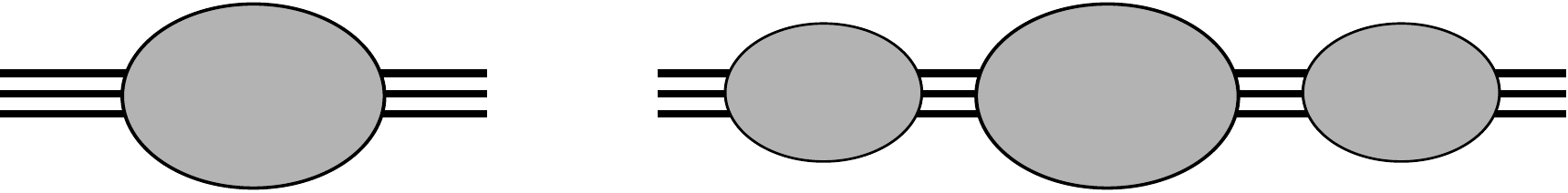}
\caption{The connected NLO 2-point function (l.h.s.\ of the f\/igure) is obtained by gluing connected LO 2-point functions on both sides of the 1PI NLO 2-point function (r.h.s.\ of the f\/igure).\label{fig:gnlo}}
\end{figure}

The second identity comes from the two dif\/ferent combinatorial ways of obtaining 1PI NLO two-point graphs from connected LO and NLO two-point graphs, and writes
\begin{gather}\label{ecuatie}
	\Sigma_{\rm NLO} = \lambda G_{\rm LO} + 3\lambda^2 G_{\rm LO}^2 G_{\rm NLO},
\end{gather}
where the combinatorial factor three arises from the three dif\/ferent internal lines of the elementary melon, on which the NLO two-point function can be inserted. The two terms above exhaust all contributions to the 1PI NLO two-point function, since any such graph contains only one tadpole
due to Theorem~\ref{thm:main}, and thus either all melonic subgraphs are inside the tadpole
(the f\/irst term) or the tadpole is a subgraph of a melonic graph (the second term).

This is illustrated in Fig.~\ref{fig:Snlo}, where the 1PI NLO two-point function $\Sigma_{\rm NLO}$ is represented on the l.h.s.\ of the f\/igure. The remaining two drawings represent the two terms on the r.h.s.\ of~\eqref{ecuatie}.

\begin{figure}[htb]\centering
\includegraphics[scale=0.5]{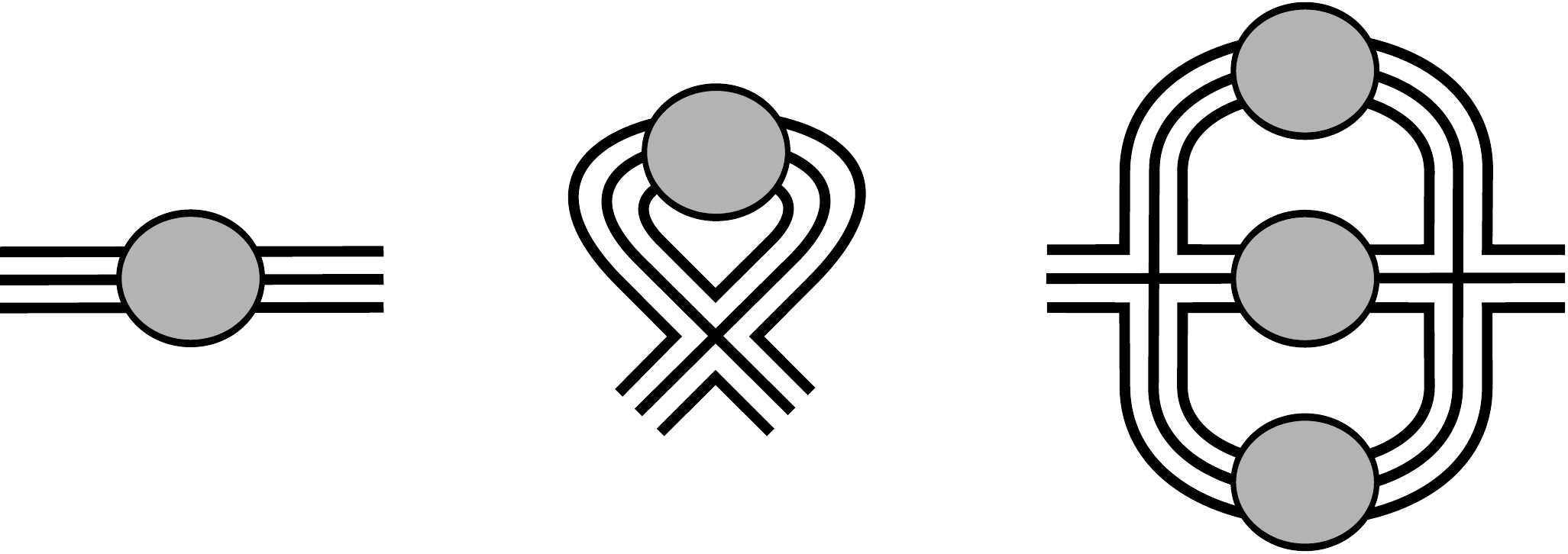}
\caption{Illustration of~\eqref{ecuatie}. The 1PI NLO two-point function $\Sigma_{\rm NLO}$ is represented on the l.h.s.\ of the f\/igure; the remaining two drawings represent the two terms on the r.h.s.\ of~\eqref{ecuatie}.\label{fig:Snlo} }
\end{figure}

Now, putting these two identities together, we obtain f\/irst the identity
\begin{gather*}
	G_{\rm LO}^{-2} G_{\rm NLO} = \lambda G_{\rm LO} + 3\lambda^2 G_{\rm LO}^2 G_{\rm NLO} ,
\end{gather*}
from which we can solve for $G_{\rm NLO}$ in terms of the connected LO two-point function
\begin{gather*}
	G_{\rm NLO} = \frac{\lambda G_{\rm LO}^3}{1-3\lambda^2 G_{\rm LO}^4} .
\end{gather*}
On the other hand, dif\/ferentiating the LO two-point function relation $G_{\rm LO} = 1 + \lambda^2 G_{\rm LO}^4$ we get
\begin{gather*}
	\frac{\prt}{\prt\lambda^2} G_{\rm LO} = \frac{G_{\rm LO}^4}{1 - 4\lambda^2 G_{\rm LO}^3} = \frac{G_{\rm LO}^5}{1 - 3\lambda^2 G_{\rm LO}^4},
\end{gather*}
where for the last equality we used the LO two-point function identity again. Thus, we get the expression
\begin{gather*}
	G_{\rm NLO} = \frac{\lambda}{G_{\rm LO}^2} \frac{\prt}{\prt\lambda^2} G_{\rm LO} ,
\end{gather*}
which implies, together with $G_{\rm LO} \sim \operatorname{const} + (1 - (\lambda^2/\lambda_c^2))^{1/2}$,
\begin{gather*}
	G_{\rm NLO} \sim \left(1 - \frac{\lambda^2}{\lambda_c^2}\right)^{-1/2}.
\end{gather*}

Finally, we use the following Dyson--Schwinger equation
\begin{gather*}
	0 = \int \dd\bar{\phi}\, \dd\phi\, \frac{\delta}{\delta\phi_{ijk}} \big( \phi_{i'j'k'} e^{-S[\phi,\hat{\phi}]} \big).
\end{gather*}
We thus obtain the relation
\begin{gather}\label{eq:SDeq}
	G_{\rm NLO} = 1 - 4\lambda^2 \frac{\prt}{\prt\lambda^2} E_{\rm NLO},
\end{gather}
relating the connected two-point function $G_{\rm NLO}$ to the free energy $E_{\rm NLO}$. Accordingly, we have $E_{\rm NLO} \sim (1 - (\lambda/\lambda_c)^2)^{1/2}$ from (\ref{eq:SDeq}), and thus f\/ind the same critical value of the coupling constant (i.e., the radius of convergence) for the NLO series (as series in the coupling constant $\lambda$) as for the LO series.
Nevertheless, one has a distinct value for the NLO susceptibility exponent
\begin{gather*}%\label{main}
\gamma_{\rm NLO}=\frac 32.
\end{gather*}

\section{Some combinatorial developments}\label{sec:fusy}

In this section we perform a thorough combinatorial analysis of the general term of the $1/N$ expansion of the previous section. This follows the original article~\cite{FT}.

Note that the `$+$' and `$-$' signs canonically induce an orientation of the edges. One can thus represent MO tensor graphs as a particular class of {\it oriented four-regular maps}~-- see Fig.~\ref{map} for the representation of the vertex of the MO tensor graph of Fig.~\ref{graf} in such a way.
Moreover, we work in this section which rooted maps, i.e., a connected map with a marked edge. This allows to better handle symmetries issues when counting such combinatorial objects. In physics language, marking an edge transforms a vacuum graph into a two-point Feynman graph.
One can further see this root as a fake vertex of valence two placed on some root edge.
\begin{figure}[htb]
\centering
\includegraphics[scale=0.5]{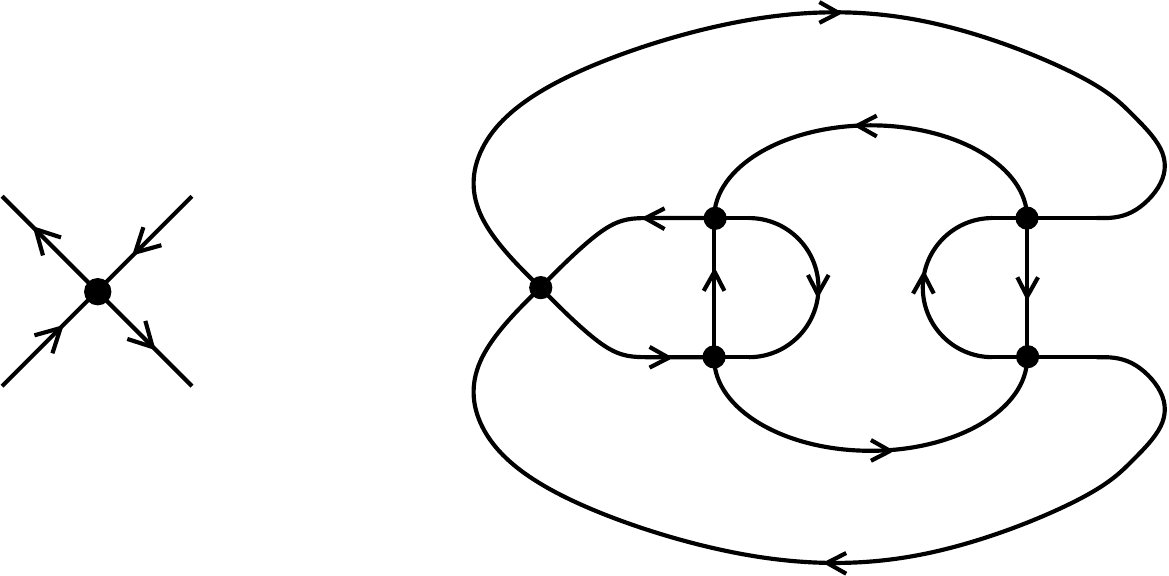}
\caption{Representation of MO tensor graphs as oriented four-regular maps.}\label{map}
\end{figure}

We then def\/ine the \textit{cycle graph} as an oriented self-loop carrying no vertex. The cycle graph is connected, has $V=0$, $F=3$ (one face in each
type), hence has degree~$0$. In its rooted version, the (rooted) \emph{cycle-graph} is made of an oriented tadpole incident to the root-vertex.

In this context, def\/ine the \emph{removal} of a melon as the operation represented in Fig.~\ref{reduce_melon} (where possibly $u=v$, and possibly $u$ or $v$ might be the root if the MO-graph is rooted).
\begin{figure}\centering
\includegraphics[width=8cm]{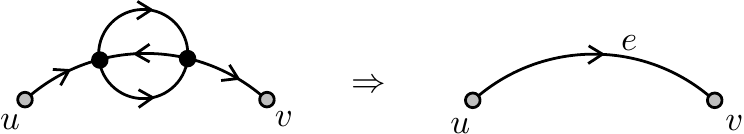}
\caption{Melon removal.\label{reduce_melon}}
\end{figure}
The reverse operation (where~$e$ is allowed to be a tadpole, and is allowed to be incident to the root-vertex if the MO-graph is rooted) is called the \emph{insertion} of a melon at an edge.

Analogously, one can def\/ine a tadpole (or loop, in graph theoretical language) removal~-- see Fig.~\ref{fig:remove_loop}.
\begin{figure}\centering
\includegraphics[width=4cm]{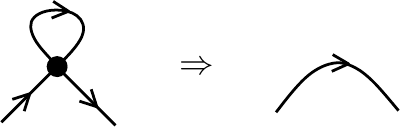}
\caption{Removing a tadpole in an MO-graph.}\label{fig:remove_loop}
\end{figure}

One then has
\begin{Lemma}
\label{lem:remove_loop}
Let ${\mathcal{G}}$ be an MO-graph of degree $\delta$,
let $\ell$ be a tadpole of ${\mathcal{G}}$, and let ${\mathcal{G}}'$ be the MO-graph obtained
from ${\mathcal{G}}$ by erasing the tadpole and its incident vertex, as shown in Fig.~{\rm \ref{fig:remove_loop}}. Then~${\mathcal{G}}'$ has degree $\delta-1/2$. Hence~${\mathcal{G}}$ has at most $2\delta$ tadpoles.
\end{Lemma}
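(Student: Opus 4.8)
The plan is to deduce the degree shift from the face-counting identity \eqref{magica}, which holds for every connected vacuum MO-graph, by tracking how the number of faces changes when the tadpole and its incident vertex are erased (Fig.~\ref{fig:remove_loop}).

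First I would verify that $\cG'$ is again a connected vacuum MO-graph, so that \eqref{magica} applies to it as well. Let $v$ be the vertex carrying the tadpole $\ell$. Since each edge joins a `$+$' corner to a `$-$' corner, the two half-edges glued by $\ell$ are one $\phi$- and one $\bar\phi$-half-edge, adjacent around $v$; the two half-edges freed by the removal are therefore also one $\phi$- and one $\bar\phi$-half-edge. Gluing them produces a single new edge whose three strands run straight ($S$ to $S$, $R$ to $R$, $L$ to $L$), hence a legitimate, correctly oriented MO propagator. Connectedness is preserved, since any path through $v$ either stayed inside $\ell$ or entered and left through the two freed half-edges, and the vacuum property is clearly preserved; so \eqref{magica} holds for $\cG'$ with $V'=V-1$ vertices.

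The crux is to show that exactly one face disappears, i.e.\ $F'=F-1$. Here I would use the local strand pattern at $v$: any two of the four half-edges of an MO vertex share exactly one strand. The two half-edges joined by $\ell$ thus share exactly one vertex-strand, and that strand, together with the corresponding strand of $\ell$, closes up into a single face lying entirely inside the tadpole--vertex pattern; this is the face that is deleted. Each of the remaining two strands of $\ell$ links one of the two freed half-edges to the other through $v$, so after the removal they simply become two of the three strands of the new edge (the third being the strand that already joined the two freed half-edges directly at $v$); the faces running along them are merely straightened, neither created nor destroyed. Hence $F'=F-1$.

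It then remains only to combine the two counts. Writing \eqref{magica} as $\delta=\frac{3}{2}V+3-F$ for both graphs and substituting $V'=V-1$ and $F'=F-1$ yields $\delta(\cG')=\delta-\frac{1}{2}$ at once. The final assertion follows by iteration: each tadpole removal returns a valid MO-graph and lowers the degree by $\frac{1}{2}$, while the degree is a non-negative half-integer, and a graph of degree $0$ can carry no tadpole (otherwise one further removal would force a negative degree); hence the removal process can be run at most $2\delta$ times, which bounds the number of tadpoles of $\cG$ by $2\delta$. I expect the face count of the third paragraph to be the main obstacle, since it requires following the $S$, $R$ and $L$ strands through the vertex and checking that precisely one of the three strands of $\ell$ closes into a face while the other two are transmitted; it is the orientation structure, fixing the types of the freed half-edges, that guarantees the straightened edge is again a valid MO edge.
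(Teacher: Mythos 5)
Your argument for the degree drop is exactly the paper's: the paper's entire proof reads that $\mathcal{G}'$ has the same number of connected components as $\mathcal{G}$, one vertex less, and one face less (of length one), after which \eqref{magica} gives $\delta(\mathcal{G}')=\delta-\tfrac12$. Your verification that $\mathcal{G}'$ is again a connected vacuum MO-graph, and your strand-by-strand check that precisely one face (the length-one face closed inside the tadpole) disappears while the remaining strands of $\ell$ are straightened into the three strands of the new edge, are correct and simply fill in what the paper compresses into a single sentence.

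The final step, however, contains a genuine gap --- one the paper itself hides behind the word ``Hence''. You pass from ``the removal process can be run at most $2\delta$ times'' to ``$\mathcal{G}$ has at most $2\delta$ tadpoles'', and this requires that each removal eliminate exactly one tadpole of the original graph, leaving all others intact for later steps. That fails when two tadpoles share their incident vertex: removing one of them deletes that vertex, and the second tadpole degenerates into the vertex-free cycle graph, so two tadpoles vanish in a single step. This is precisely the situation of the infinity graph (one vertex carrying two tadpoles, left part of Fig.~\ref{planartadtwistsun}), which has degree $\tfrac12$ and two tadpoles --- so the stated bound $2\delta=1$ is actually violated there, and no argument can close that case; the statement implicitly excludes it. (A removal can also \emph{create} a tadpole, when the two freed half-edges belonged to a pair of parallel edges; this does not harm the bound on the original tadpoles, but it is a second reason why ``number of steps'' and ``number of tadpoles'' need not coincide.) The repair is to observe that in a connected MO-graph other than the infinity graph, distinct tadpoles sit at distinct vertices, and since removals never merge vertices this persists throughout the process; the original tadpoles can then be removed one per step, each step lowering the degree by $\tfrac12$ while the degree remains non-negative, which yields the bound $2\delta$ in every case in which it holds.
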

\begin{proof}
The resulting graph ${\mathcal{G}}'$ has the same number of connected components as ${\mathcal{G}}$, has one vertex less, and has one face less (which has length one).
\end{proof}

A melonic graph (see above) can then be seen as an MO graph which can be reduced to the rooted cycle graph by successive removal of melons.

\begin{figure}\centering
\includegraphics[width=12cm]{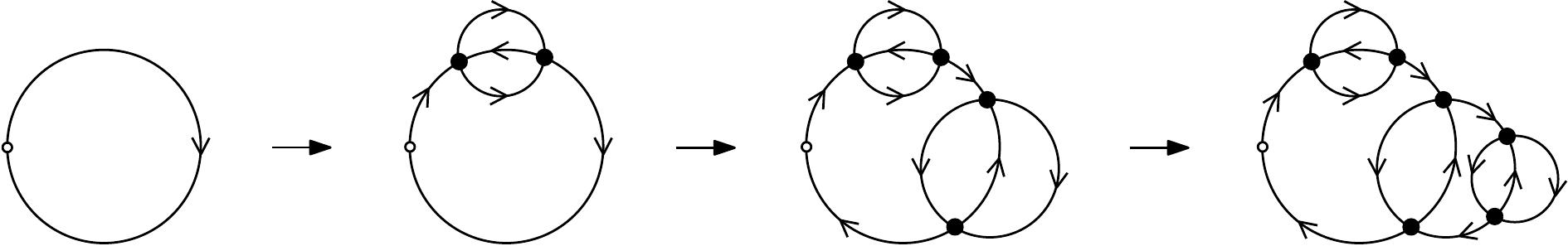}
\caption{A rooted melonic graph can be built from the cycle-graph, by melonic insertions.}\label{fig:melonic}
\end{figure}

Getting back to melonic insertions (see above), one needs to stress that {\it there is an infinite number of melon free graphs of a given degree}. Nevertheless, as we will explain the sequel, some particular types of subgraphs can be repeated without increasing the degree!

\subsection{Dipoles, chains, schemes and all that}

Let us now give the following def\/inition:

\begin{Definition}
A {\it dipole} is a subgraph of a (possibly rooted) MO-graph formed by a couple of vertices connected by two parallel edges containing a face of length two incident to these distinct vertices, and not passing by the root if the graph is rooted.
\end{Definition}
Accordingly, one has three distinct types of dipoles: L, R, or S, see Fig.~\ref{fig:dipoles} (left part).
As the f\/igure shows, each dipole has two exterior half-edges on one side and two exterior
half-edges on the other side.
Notice also that a double edge does not necessarily delimit a dipole, as shown in Fig.~\ref{fig:dipoles} (right part).

\begin{figure}\centering
\includegraphics[width=12cm]{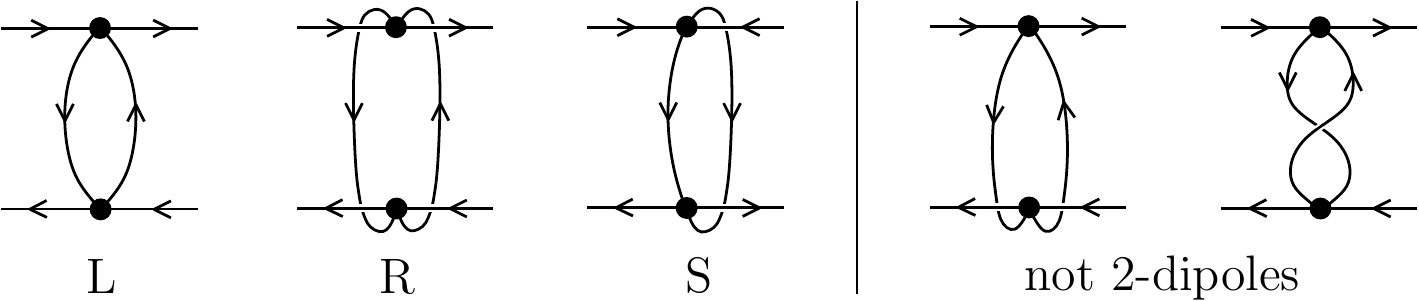}
\caption{Left: the three types of dipoles. Right: examples of double edges that do not form a~dipole.}\label{fig:dipoles}
\end{figure}

Let us also notice that a face of length two is always incident to two distinct vertices, except in the
 MO-graphs that are made of one vertex and two tadpoles~-- the inf\/inity graphs. They have no dipole but have two faces of length two.

Let us now give the following def\/initions:

\begin{Definition}
In an MO-graph ${\mathcal{G}}$, a \textit{chain} is a sequence of dipoles $d_1,\ldots,d_p$ (not passing
by the root if the graph is rooted) such that for each $1\leq i<p$, $d_i$ and $d_{i+1}$ are connected
by two edges involving two half-edges on the same side of $d_i$ (left or right) and two half-edges on the same side (left or right) of $d_{i+1}$.
\end{Definition}

For an example of a chain and of a sequence of dipoles which is not a chain, see Fig.~\ref{fig:chain_dipoles}.

\begin{figure}[t!]
\centering
\includegraphics[width=12cm]{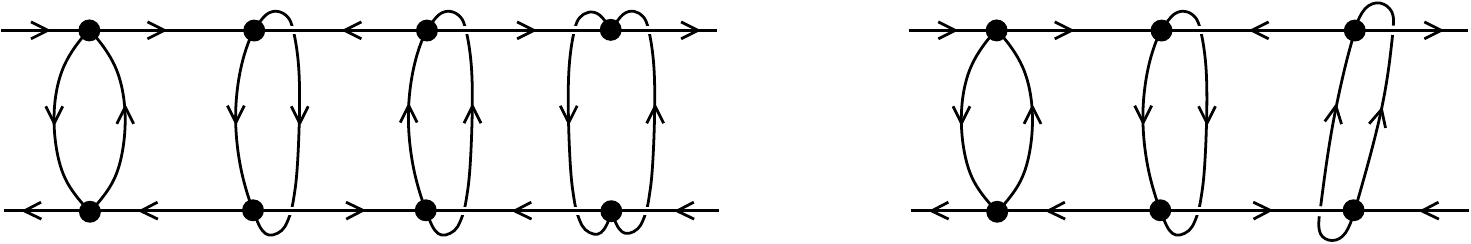}
\caption{Left: a chain of four dipoles. Right: a sequence of three dipoles that does not form a chain.}
\label{fig:chain_dipoles}
\end{figure}

Let us now give the following def\/initions:

\begin{Definition}
 An \textit{unbroken chain} is a chain for whom all the dipoles are of the same type (L, R or S).
\end{Definition}

\begin{Definition}
A {\it broken chain} is a chain which is not unbroken.
\end{Definition}

\begin{Definition}
A \textit{proper chain} is a chain of at least two dipoles.
\end{Definition}

\begin{Definition}
A \textit{maximal proper chain} is
a proper chain
for whom there is no larger chain in the given map
that the respective maximal proper chain is part of.
\end{Definition}

In a graph, one can then replace a proper chain by a {\it chain-vertex}. In Fig.~\ref{fig:chains}, we give examples of the four possible types of unbroken proper chains, replaced by corresponding chain-vertices, marked with an appropriate graphical symbol. Note that one has two distinct cases, if the number of dipoles of type S is even or odd. The bottom part of the f\/igure gives an example of a broken proper chain replaced by the corresponding chain-vertex.

\begin{figure}[t!]
\centering
\includegraphics[width=12cm]{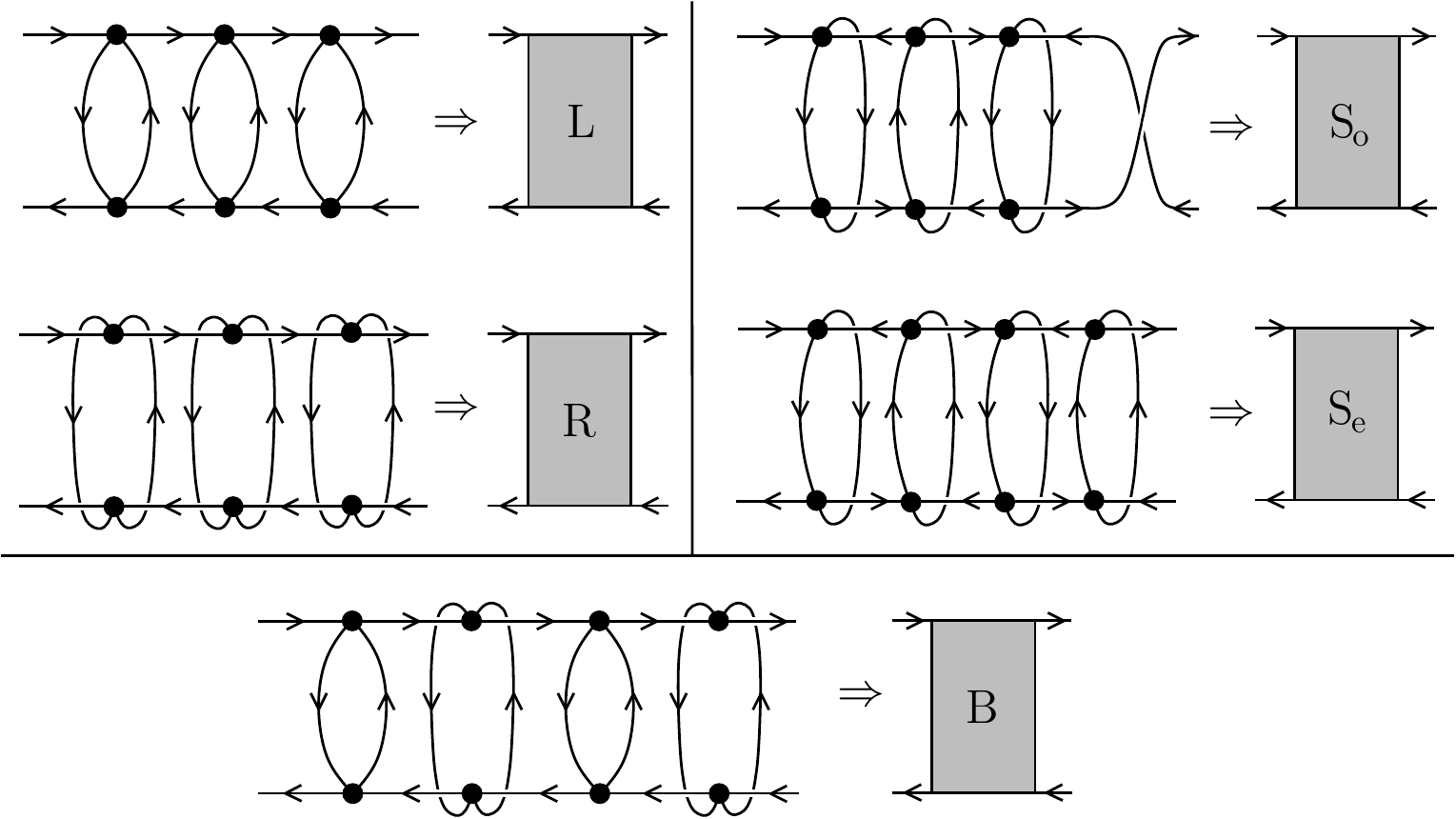}
\caption{Examples of the types of unbroken proper chains; example
of a broken proper chain.}\label{fig:chains}
\end{figure}

The corresponding conf\/igurations of strands are shown in Fig.~\ref{fig:chain_strands}. Note that these are all the possible strand conf\/igurations of the MO setting.

\begin{figure}[t!]
\centering
\includegraphics[width=12cm]{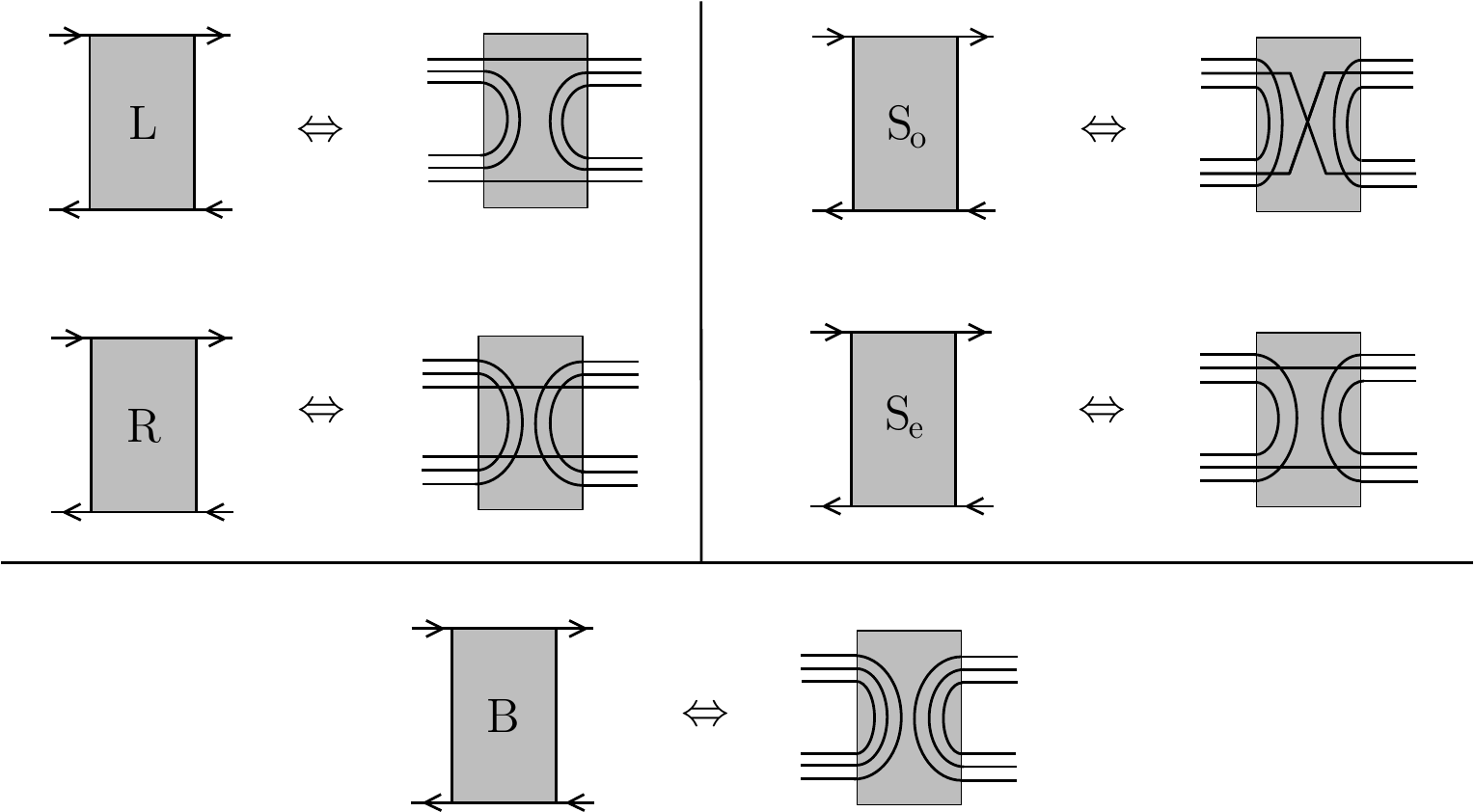}
\caption{The conf\/igurations of the strands for each type of chain-vertex.}\label{fig:chain_strands}
\end{figure}

 We can now give the following def\/inition:
\begin{Definition}
Let ${\mathcal{G}}$ be a rooted melon-free MO-graph.
The \textit{scheme} of ${\mathcal{G}}$ is the graph obtained by simultaneously replacing any maximal proper chain
of ${\mathcal{G}}$ by the corresponding {chain-vertex}.
\end{Definition}

 One can see how such a scheme is obtained from a melon-free graph from the example of
 Fig.~\ref{fig:example_scheme}.

\begin{figure}[t!]
\centering
\includegraphics[width=13cm]{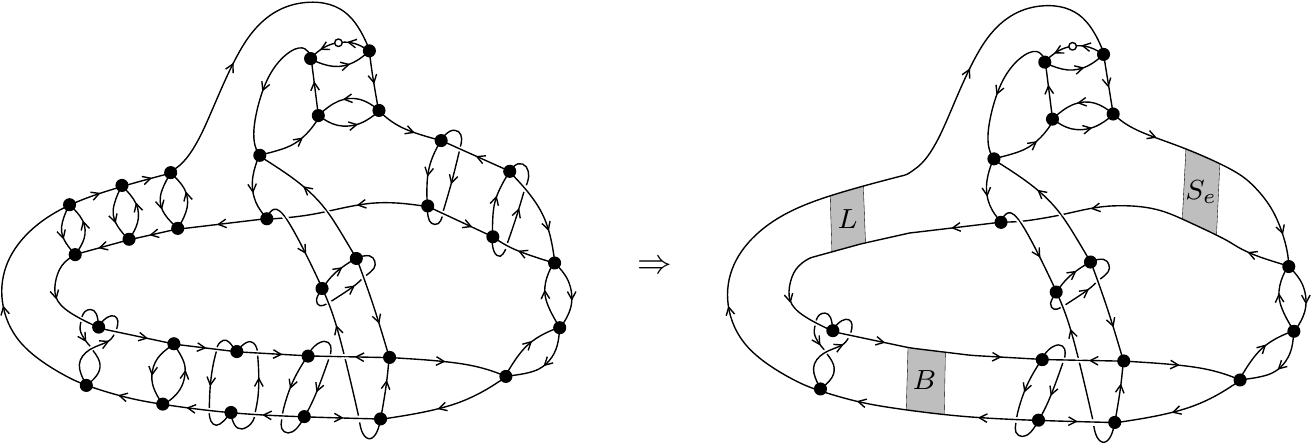}
\caption{Left: a rooted melon-free MO-graph. Right: the associated scheme.}\label{fig:example_scheme}
\end{figure}

Finally, one can give the following def\/inition:

\begin{Definition}
A \textit{reduced scheme} is a rooted melon-free MO-graph with chain-vertices
and with no proper chain.
\end{Definition}

Note that, by construction, the scheme of a rooted melon-free MO-graph (with no chain-vertices) is a reduced scheme.

The following result ensures that the degree def\/inition for MO-graphs with chain-vertices is consistent with the replacement of chains by chain-vertices:

\begin{Proposition}\label{lem:substitute}
Let ${\mathcal{G}}$ be an MO-graph with chain-vertices and let ${\mathcal{G}}'$ be an MO-graph with chain-vertices obtained from ${\mathcal{G}}$
by consistently substituting a chain-vertex
by a chain of dipoles. Then the degrees of ${\mathcal{G}}$ and ${\mathcal{G}}'$ are equal.
\end{Proposition}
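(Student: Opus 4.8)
The plan is to show invariance of the degree $\delta$ under substituting a chain-vertex by a chain of dipoles by reducing everything to the two jacket-level quantities that determine $\delta$, namely the non-orientable genera $k_{\mathcal{J}_i}$ (equivalently, via~\eqref{magica}, the face-vertex balance $F-\frac{3}{2}V$). The strategy is local: since the substitution only modifies the graph inside the region occupied by the chain-vertex, with the two pairs of exterior half-edges on its two sides held fixed, I would compute how $V$ and $F$ change and verify that the combination entering $\delta$ is unchanged. By Proposition~\ref{lem:substitute}'s own hypothesis the substitution is \emph{consistent}, meaning the strand configuration at the two sides of the chain-vertex (one of the configurations in Fig.~\ref{fig:chain_strands}) matches the strand configuration produced at the ends of the inserted chain of dipoles; this consistency is precisely what guarantees that no global face is accidentally merged or split outside the substituted region.

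First I would reduce to the case of inserting or removing a \emph{single} dipole, since a chain is built by concatenating dipoles and the general statement then follows by iterating the one-dipole computation along the sequence $d_1,\ldots,d_p$. For a single dipole of a fixed type (L, R, or S), I would count directly: inserting a dipole adds two vertices and four edges, and I would tabulate the resulting change in the number of faces of each jacket $\mathcal{J}_i$ by tracking how the three strand classes reconnect through the dipole. The key observation is that a dipole contains a face of length two, and across the two jackets in which each face appears the face-count changes compensate so that $F-\frac{3}{2}V$ is preserved; equivalently, the inserted dipole contributes zero to each $k_{\mathcal{J}_i}$, hence zero to $\delta$. This is consistent with the already-stated fact (Lemma~\ref{lem:remove_loop} is the analogue for tadpoles, and the remark that ``melonic insertions preserve the degree'' via~\eqref{magica}) that such local two-vertex insertions are degree-neutral.

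The second ingredient is to handle the distinction between \emph{unbroken} and \emph{broken} chains, and the parity issue noted in Fig.~\ref{fig:chains} (whether the number of S-dipoles is even or odd). I would verify that the strand-reconnection pattern summarized in Fig.~\ref{fig:chain_strands} is exactly the data encoded by the chain-vertex symbol, so that the chain-vertex carries the same degree contribution as any chain reducing to it. Concretely, for each of the admissible chain-vertex types I would check that replacing it by any chain with the matching boundary strand-configuration yields the same per-jacket genus increment; because the increment is determined solely by how the three strand classes enter and exit the substituted block, two chains with identical boundary configurations are interchangeable without affecting $\delta$.

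The main obstacle I anticipate is the bookkeeping for how a single dipole insertion redistributes \emph{faces across the three jackets}, since a face may have length two within the dipole yet belong to a longer global face once the exterior half-edges are reconnected, and one must ensure the two-jacket membership of each face is correctly accounted so that the sum $\sum_{\mathcal{J}} k_{\mathcal{J}}$ in the Definition of $\delta$ is genuinely invariant and not merely invariant jacket-by-jacket in the orientable case. Carefully separating the purely local face of length two (which lives inside the dipole) from the global faces threading through the dipole, and checking the count in both the even- and odd-$S$ parity cases of Fig.~\ref{fig:chains}, is where the real work lies; once that local computation is in hand, the full statement follows by the iteration described above.
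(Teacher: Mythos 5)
Your opening paragraph captures exactly the idea the paper itself relies on (the review gives no detailed proof, deferring to \cite{FT}, but justifies the statement by noting that a consistent substitution ``preserves the strand structure (hence the degree)''): since the degree is determined by the faces via \eqref{magica}, and the consistency hypothesis fixes the through-strand connectivity at the boundary of the substituted block, only a local face/vertex count remains to be done. The genuine gap is in your second paragraph, namely the reduction to \emph{single}-dipole insertions together with the claim that an inserted dipole ``contributes zero to each $k_{\mathcal{J}_i}$, hence zero to $\delta$''. Such universal degree-neutrality of single dipoles is false, and the paper says so explicitly: Lemma~\ref{lem:removals} states that removing a non-separating dipole changes the degree by one or three, and Lemma~\ref{lem:remove_loop}, which you invoke as supporting evidence, records a degree \emph{drop} of $1/2$ for tadpole removal, not invariance. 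A local two-vertex modification preserves the degree only when it preserves the boundary strand configuration; single-dipole operations in general do not (melonic insertion does, which is why that case works).

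This failure is fatal precisely at the point you flag as ``where the real work lies'': straight chains. Since even and odd straight chains define \emph{different} chain-vertex types (Figs.~\ref{fig:chains} and~\ref{fig:chain_strands}), lengthening a straight chain by one S-dipole flips the boundary strand configuration; it is therefore not a consistent substitution, external faces of the ambient graph may merge or split, and the resulting degree change depends on the ambient graph and is nonzero in general (were it always zero, the even/odd distinction in the scheme decomposition would be pointless). Hence one cannot iterate one dipole at a time. The repaired induction is: take as base case the identification of each chain-vertex type with the shortest chain of that type (two dipoles, or three for the odd straight type), and as inductive step add one dipole to unbroken L/R chains and to broken chains, but \emph{two} S-dipoles at a time to straight chains. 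In each such type-preserving step one checks that $2$ vertices and exactly $3$ faces are created, one face of each strand class, so that each jacket gains exactly two faces and, using $e_{\mathcal{J}}=2v_{\mathcal{J}}$ in \eqref{facejacket}, each $k_{\mathcal{J}_i}$ is separately unchanged (respectively $4$ vertices and $6$ faces for the double S-insertion). With this repair your argument goes through and becomes essentially the verification carried out in \cite{FT}.
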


One can then proove the following f\/initeness result:

\begin{Theorem}\label{prop:finite}
The set of reduced schemes of a given degree $\delta$
is finite.
\end{Theorem}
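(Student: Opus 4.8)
The plan is to reduce the statement to a single quantitative estimate: a uniform bound on the number of vertices. Concretely, I would show that every reduced scheme of degree $\delta$ has at most $f(\delta)$ vertices, where I count ordinary four-valent vertices together with chain-vertices, each chain-vertex being regarded as a four-valent vertex decorated by one of the finitely many admissible strand configurations of Fig.~\ref{fig:chain_strands}. Once such a bound is established the theorem is immediate, since there are only finitely many rooted four-regular maps on a bounded number of vertices and only finitely many vertex decorations, hence only finitely many reduced schemes.

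The conceptual core is to identify the degree-neutral enlargement moves. By the melonic-insertion property recalled above, inserting a melon preserves $\delta$, and by Proposition~\ref{lem:substitute} expanding a chain-vertex into a genuine chain of dipoles also preserves $\delta$; dually, melon removal and chain contraction leave $\delta$ unchanged. A reduced scheme is, by definition, melon-free and contains no proper chain, so neither shrinking move is available: reduced schemes are precisely the irreducible representatives in their degree class. The whole problem thus becomes showing that irreducibility caps the size, i.e.\ that a melon-free, proper-chain-free MO-graph with chain-vertices and more than $f(\delta)$ vertices cannot exist.

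For the quantitative mechanism I would exploit the two sources of control we already have. First, the degree controls the jackets: since $\delta=\frac12\big(k_{\mathcal{J}_1}+k_{\mathcal{J}_2}+k_{\mathcal{J}_3}\big)$, each non-orientable genus satisfies $k_{\mathcal{J}_i}\le 2\delta$, and through \eqref{facejacket}--\eqref{magica} this ties the face count to $V$. Second, the short faces are controlled structurally: tadpoles (length-one faces) number at most $2\delta$ by Lemma~\ref{lem:remove_loop}, while the absence of proper chains forces the length-two faces (dipoles) to occur in isolation rather than in arbitrarily long sequences, bounding them as well. The clean way to turn this into a vertex bound is to read each jacket as a cellulation of a surface of genus at most $2\delta$ in which melon-freeness and chain-freeness forbid collapsible bigons and long bigon-chains; such a reduced cellulation of a bounded-genus surface has only $O(\delta)$ cells, so each jacket has $O(\delta)$ edges, and hence the MO-graph itself has $V=O(\delta)$.

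The hard part is exactly this structural step: showing that ``no melon, no proper chain'' genuinely forbids unbounded degree-neutral bulk. Euler-characteristic bookkeeping alone does not suffice, because summing the face identities gives $\sum_k (k-4)\,F_k = 4\delta-12$ (writing $F_k$ for the number of length-$k$ faces and using that the $2V$ edges each carry three strands), and in this relation faces of length four are degree-neutral and a priori unbounded. One must therefore prove, by a pigeonhole/reduction argument on the dipoles and on the adjacency pattern of the remaining vertices, that too many vertices necessarily create either a removable melon or two consecutively attached dipoles, i.e.\ a proper chain to be collapsed---contradicting reducedness. Once $V$ is bounded, bounding the chain-vertices is routine, since each arises from collapsing one maximal proper chain and sits on an edge of the bounded irreducible core, so their number is $O(V)=O(\delta)$. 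Combining the tadpole bound, the dipole bound, and this core bound yields the desired $f(\delta)$, and finiteness follows.
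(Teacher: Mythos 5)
Your overall reduction is the same as the paper's: finiteness follows once one bounds the size of a reduced scheme in terms of $\delta$, since there are only finitely many rooted four-regular maps of bounded size carrying finitely many chain-vertex decorations. The problem is that you never establish that size bound; you explicitly defer ``the hard part'' to an unspecified pigeonhole/reduction argument, and that deferred step is precisely the mathematical content of the theorem. The paper supplies it through two specific lemmas: Lemma~\ref{lem:bound1}, which bounds the number of dipoles plus chain-vertices of a reduced scheme by $7\delta-1$, proved by iteratively removing non-separating dipoles and chain-vertices (each such removal lowers the degree by a definite amount, by Lemma~\ref{lem:removals}) and then analyzing the resulting decorated tree of components; and Lemma~\ref{lem:bound2}, which says that a connected MO-graph of degree $\delta$ with at most $k$ dipoles has at most $n_{k,\delta}$ vertices. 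Combined with the injective canonical substitution $S\mapsto G(S)$ (which replaces chain-vertices by genuine chains while preserving the degree), these two lemmas give the bound. Your sketch contains neither argument.

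Moreover, two of the intermediate claims you do make are incorrect. First, ``absence of proper chains forces dipoles to occur in isolation, bounding them as well'': isolation does not bound the count; a priori a chain-free scheme could contain arbitrarily many pairwise non-adjacent dipoles, and ruling this out is exactly what the removal/tree argument behind Lemma~\ref{lem:bound1} does --- it uses the degree, not merely chain-freeness. Second, your claim that a ``reduced cellulation of a bounded-genus surface has only $O(\delta)$ cells'' is false as stated: genus-zero (planar) cellulations can have arbitrarily many cells, and, as your own identity $\sum_k(k-4)F_k=4\delta-12$ shows, length-four faces are degree-neutral and hence invisible to any Euler-characteristic or single-jacket genus count. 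The obstruction to unbounded degree-neutral bulk lives in the interplay of melon-freeness, chain-freeness, and all three jackets simultaneously, which is why the paper needs the two dedicated combinatorial lemmas rather than a surface-topology argument; as it stands, your proposal restates the theorem as a conjecture about that interplay rather than proving it.
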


The proof (see \cite{FT} for details) relies on the following lemmas:

\begin{Lemma}\label{lem:bound1}
For each reduced scheme of degree $\delta$, the sum $N(G)$ of the numbers of dipoles
and chain-vertices satisfies the bound: $N(G)\leq 7\delta-1$.
\end{Lemma}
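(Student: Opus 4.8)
The plan is to extract the bound from a global Euler-type face count, the hypotheses of the reduced scheme entering only as the local exclusions that make the count linear in $\delta$. The first point to appreciate is why the bound is not immediate: by Proposition~\ref{lem:substitute} a chain-vertex may be substituted by a chain of dipoles without changing the degree, and likewise melonic and dipole insertions are degree-neutral, which is exactly the reason there are infinitely many melon-free graphs of a fixed degree. The finiteness must therefore come from the gluing: once every melon has been excluded and every maximal proper chain collapsed into a chain-vertex, the remaining four-legged blocks (isolated dipoles and chain-vertices) can no longer be assembled in arbitrarily large numbers at fixed degree, and I would make this quantitative.

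First I would set up the counting, treating each dipole and each chain-vertex as a four-legged block. Since the underlying MO-graph is four-valent and vacuum it has $2V$ edges, each traversed by its three $S$, $L$, $R$ strands, so the face lengths satisfy $\sum_f |f| = 6V$; combining this with the master relation \eqref{magica}, $F = \frac{3}{2}V + 3 - \delta$, and splitting the faces according to whether their length is $1$, $2$, or at least $3$, gives a balance tying the number of short faces to $V$ and $\delta$. The length-one faces are already controlled by Lemma~\ref{lem:remove_loop}, which bounds the tadpoles by $2\delta$, so the real task reduces to bounding the number of length-two faces, i.e.\ essentially the count of dipoles and of the bigons internal to chain-vertices and infinity-type pieces.

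The crux, and the step I expect to be the main obstacle, is to show that a reduced scheme cannot carry too many short faces relative to $\delta$. Here both reduced-scheme hypotheses must be used: melon-freeness forbids the degree-zero, face-maximizing configurations, while the maximality of proper chains forbids two compatible adjacent dipoles, since these would already have been collapsed. I would translate these exclusions into the global balance by going through the $L$, $R$, $S$ dipole types and the unbroken (even or odd number of $S$-dipoles) and broken chain-vertex types of Figs.~\ref{fig:chains}--\ref{fig:chain_strands}, reading off from \eqref{facejacket} how each affects the non-orientable genera $k_{\mathcal{J}_i}$; the content is that any block beyond the degree budget is forced to create either a melon or an extendable chain, contradicting reducedness. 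Assembling this yields a linear inequality $N(G) \le a\,\delta + b$, and optimizing the case analysis together with accounting for the rooted cycle-graph and infinity-graph baselines — the source of the additive constant — is what pins the bound down to the stated $N(G) \le 7\delta - 1$.
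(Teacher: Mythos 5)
Your plan founders at precisely the step you identify as the crux, and the obstacle is not technical but structural. Combining the strand count $\sum_f|f|=3E=6V$ with \eqref{magica} gives the exact identity
\begin{gather*}
2F_1+F_2 \;=\; 9-3\delta-\tfrac32 V+\sum_{|f|\geq 3}\big(|f|-3\big),
\end{gather*}
where $F_1$, $F_2$ denote the numbers of faces of length one and two. Both $V$-dependent terms on the right are of order $V$ (the average face length of a $4$-regular vacuum graph is~$4$), they enter with opposite signs, and $V$ is \emph{not} bounded at this stage: bounding $V$ in terms of the degree and the number of dipoles is exactly Lemma~\ref{lem:bound2}, which is logically downstream of Lemma~\ref{lem:bound1}, not available for its proof. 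So the global balance can only ever produce \emph{lower} bounds on the number of short faces (that is how one shows that degree-zero graphs contain melons); the degree enters the face count exactly once, through \eqref{magica}, and there is no residual Euler-type leverage from which an \emph{upper} bound on $F_2$, hence on $N(G)$, could be squeezed. Your intended repair --- ``any block beyond the degree budget is forced to create either a melon or an extendable chain'' --- asserts the conclusion rather than proving it, and taken literally it is false: by the first item of Lemma~\ref{lem:removals}, a \emph{separating} dipole or chain-vertex carries no degree at all (its removal leaves the degree unchanged, merely distributing it among the components), and locally it creates neither a melon nor an extendable chain. No per-block degree-charging scheme can therefore account for the separating blocks, which is where the whole difficulty of the lemma sits.

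What the paper (following~\cite{FT}) does instead, and what is missing from your proposal, is a surgery-plus-connectivity argument. One first removes, iteratively, the \emph{non-separating} dipoles and chain-vertices: by Lemma~\ref{lem:removals} each such removal strictly decreases the degree by a definite amount, so their number is linearly bounded in $\delta$. The remaining blocks are all separating, and removing them decomposes the scheme into a decorated \emph{tree of components} (Fig.~\ref{fig:from_scheme_to_tree}), the tree edges being precisely the separating blocks. Reducedness is then used globally, not locally: a degree-zero component with too few neighbours in this tree would reassemble into a melon or into a proper chain extending one of the adjacent blocks, contradicting melon-freeness or the maximality of chains; hence degree-zero components (apart from the root) have at least three tree-neighbours, while the number of positive-degree components is at most $2\delta$. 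Elementary tree combinatorics then bounds the number of tree edges, and adding the two contributions yields $N(G)\leq 7\delta-1$. This mechanism is invisible to any face count, because the separating/non-separating dichotomy and the component structure are connectivity data; if you want a proof along your lines, you would still have to import this tree-of-components step (or an equivalent), at which point the Euler balance becomes superfluous.
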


\begin{Lemma}\label{lem:bound2}
For $k\geq 1$ and a given degree $\delta$, there is a constant $n_{k,\delta}$
such that
 any connected unrooted MO-graph $($without chain-vertices$)$ of degree~$\delta$
with at most~$k$ dipoles has at most~$n_{k,\delta}$ vertices.
\end{Lemma}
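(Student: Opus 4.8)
The plan is to argue by induction on the number of vertices $V$ (with $\delta$ and $k$ held fixed), reducing any large graph to a bounded ``core'' by peeling off reducible configurations, each of which can be charged either against the dipole budget $k$ or against the degree budget $\delta$.

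First I would record the two global counting identities that constrain the face structure. From \eqref{magica} one has $F=\frac32 V+3-\delta$, while counting strand-segments (the graph is four-valent, so $E=2V$, each edge carries three strands, and each segment lies on exactly one face) gives $\sum_f \ell(f)=6V$, where $\ell(f)$ is the number of edges traversed by the face $f$. Writing $F_\ell$ for the number of faces of length $\ell$, the length-one faces are exactly those created at tadpoles, so by Lemma~\ref{lem:remove_loop} one has $F_1\le 2\delta$; the length-two faces are (away from the infinity-graph exception) exactly those carried by dipoles, so $F_2$ is controlled by $k$. Subtracting twice the first identity from the second isolates the long faces,
\[
\sum_{\ell\ge 3}(\ell-2)\,F_\ell \;=\; 3V-6+2\delta+F_1,
\]
which shows that any genuine growth of $V$ must be accompanied by a proliferation of faces of length $\ge 3$. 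The whole point is therefore to forbid such proliferation once the number of dipoles is bounded.

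Next I would set up the reduction. If $\mathcal{G}$ contains a melon, I remove it as in Fig.~\ref{reduce_melon}: since melonic insertion and removal preserve the degree, the resulting $\mathcal{G}'$ still has degree $\delta$ and strictly fewer vertices, and the excised melon carries a dipole, so every melon removal is charged against the supply of $k$ dipoles. Dually, a chain of dipoles can be contracted to a chain-vertex without changing the degree by Proposition~\ref{lem:substitute}, and a chain of $p$ dipoles consumes $p$ dipoles, so $\le k$ dipoles caps every chain length by $k$. Iterating melon removals until the graph is melon-free leaves a melon-free graph all of whose chains have bounded length; reassembling, $V$ is bounded by the size of the melon-free, chain-free core plus a function of $k$. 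The base case anchors this at $\delta=0$, where the degree-zero graphs are exactly the melonic ones: such a graph is built from the cycle graph by melonic insertions, each insertion contributes a dipole, and so $\le k$ dipoles forces a bounded number of insertions and hence a bounded $V$. For $\delta>0$, if a tadpole is present I excise it via Lemma~\ref{lem:remove_loop} to drop to degree $\delta-\frac12$ and induct on $\delta$.

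I expect the main obstacle to be exactly the tadpole-free, melon-free situation: proving that a graph of fixed degree with too many vertices is \emph{forced} to contain a removable melon or an extendable chain, and doing the dipole bookkeeping so that no reduction step secretly creates uncontrolled new dipoles. Equivalently, one must rule out large dipole-free (and tadpole-free) MO-graphs of fixed degree, which the counting identity above does not by itself achieve, since long faces can a priori absorb arbitrary length. The degree-invariance of Proposition~\ref{lem:substitute}, together with the fact that in the MO setting every dipole is of type L, R, or S and the local strand configurations are limited to those of Fig.~\ref{fig:chain_strands}, is what makes the structural step possible; turning this into a clean, explicit vertex bound $n_{k,\delta}$ is the technical heart of the argument.
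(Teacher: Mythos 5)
Your proposal does not prove the lemma: at the decisive point you yourself write that one must ``rule out large dipole-free (and tadpole-free) MO-graphs of fixed degree'' and that turning this into an explicit bound ``is the technical heart of the argument''~--- and that heart is never supplied. All the reductions you set up (melon removal, chain contraction, tadpole removal with induction on $\delta$) only peel the graph down to exactly this remaining case, so the unproven step is not a detail: it \emph{is} the lemma. Your own identities show why nothing you have written can close it. From $F=\frac32 V+3-\delta$ and $\sum_f\ell(f)=6V$ one gets $\sum_f\bigl(4-\ell(f)\bigr)=12-4\delta$, so the average face length is $4$; bounding $F_1\le 2\delta$ (tadpoles) and $F_2\le k$ (dipoles) therefore leaves faces of length $3$ and $4$ entirely uncontrolled, and no inequality of this kind can force $V$ to be bounded. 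Concretely, a hypothetical arbitrarily large MO-graph all of whose faces have length exactly $4$ would have degree $3$ and not a single dipole or tadpole; nothing in your text excludes such a family, and excluding it is precisely the structural content that must be proved. Note also that your appeal to Proposition~\ref{lem:substitute} and to the strand configurations of Fig.~\ref{fig:chain_strands} for this ``structural step'' is a non sequitur: a dipole-free graph contains no chains, so invariance of the degree under chain substitution says nothing about it. (For the record, the present review contains no proof of this lemma either~--- it defers to~\cite{FT}~--- so the comparison is with the argument given there, of which your proposal reproduces only the easy perimeter.)

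The scaffolding itself also rests on unproven charging claims, which you partly acknowledge. Removing a melon destroys its internal length-two faces, but the re-glued edge can close up to three new ones, so ``every melon removal is charged against the supply of $k$ dipoles'' requires a net-count argument, not the remark that a melon carries a dipole; the same applies to your base case (``each insertion contributes a dipole''~--- later insertions can destroy earlier dipoles) and to tadpole removal, which can likewise create new length-two faces, so the induction on $\delta$ must run with a dipole budget growing by a bounded amount per step (harmless only because Lemma~\ref{lem:remove_loop} caps the number of such steps at $2\delta$, which should be said explicitly). These bookkeeping defects are repairable; the missing structural core is not, and without it the proposal establishes nothing beyond the counting identities, which, as you correctly observe, are insufficient.
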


In order to prove Theorem \ref{prop:finite} using these two lemmas, one can explicitly construct an injective map $S\mapsto G(S)$ which associates to a reduced scheme $S$ a rooted MO-graph with no chain-vertices. This construction can be done using the choice of canonical substitution of chain-vertices by proper chains given in Fig.~\ref{fig:substitute_canonical}.
This substitution
preserves the strand structure (hence the degree) and yields
an injective mapping from MO-graphs with chain-vertices to MO-graphs without chain-vertices.

\begin{figure}
\centering
\includegraphics[width=12.7cm]{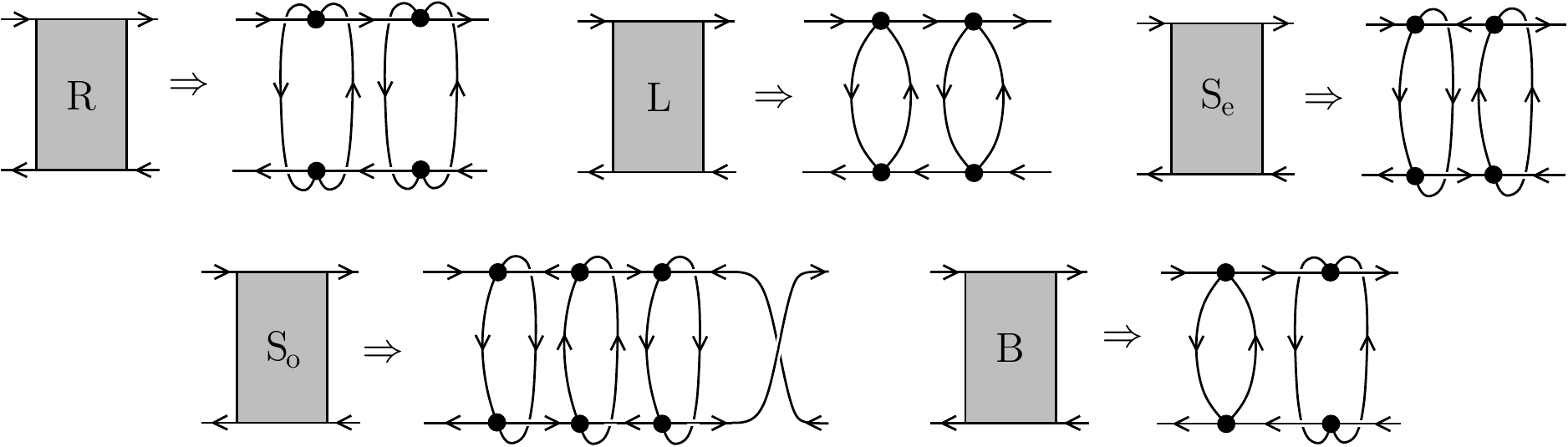}
\caption{A choice of canonical substitution leading to the map $S\mapsto G(S)$.}\label{fig:substitute_canonical}
\end{figure}

One then needs to analyze how the degree of an MO-graph with chain-vertices (not
necessarily a reduced scheme, possibly with melons) evolves when removing
\begin{enumerate}\itemsep=0pt
\item[1)] a chain-vertex,
\item[2)] a melon.
\end{enumerate}

Let ${\mathcal{G}}$ be an MO-graph with chain-vertices, and let $m$ be a chain-vertex of ${\mathcal{G}}$.
 The \emph{removal} of $m$ consists of the following operations:
\begin{enumerate}\itemsep=0pt
\item[(i)] delete $m$ from ${\mathcal{G}}$;
\item[(ii)] on each side of $m$, connect together the two
detached legs (without creating a new vertex).
\end{enumerate}
Such a removal of a chain-vertex is represented in the left part of Fig.~\ref{fig:deletion}. For the removal of a~dipole (of types L, R and S), see the right part of Fig.~\ref{fig:deletion}.

\begin{figure}\centering
\includegraphics[width=12.4cm]{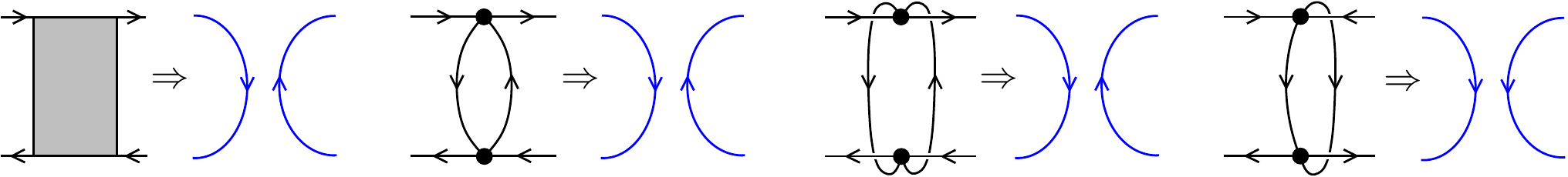}
\caption{Removal of a chain-vertex (whatever its type) and of type L, R and S.}\label{fig:deletion}
\end{figure}

The chain-vertex $m$ is said to be{\samepage
\begin{enumerate}\itemsep=0pt
\item[1)] \textit{non-separating} if ${\mathcal{G}}'$ has the same number of connected components as ${\mathcal{G}}$, and
\item[2)] \textit{separating} otherwise (in which case ${\mathcal{G}}'$ has one more connected component).
\end{enumerate}}

By carefully counting the modif\/ications made to the number of faces, vertices and connected components, one can prove the following lemma (see again \cite{FT} for details):

\begin{Lemma}\label{lem:removals}\quad
\begin{itemize}\itemsep=0pt
\item The degree is unchanged when removing a separating chain-vertex or a separating dipole $($the degree is distributed among the resulting components$)$.
\item The degree decreases by three when removing a non-separating broken chain-vertex.
\item The degree decreases by one or three when removing a non-separating unbroken chain-vertex or a non-separating dipole.
\end{itemize}
\end{Lemma}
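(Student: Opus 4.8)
The plan is to reduce every item of the lemma to one bookkeeping identity for the degree of a (possibly disconnected) MO-graph and then to track the three quantities entering it under each move. First I would record the version of \eqref{magica} that keeps the number $C$ of connected components: redoing the jacket computation of Subsection~\ref{expansion} with the Euler characteristic $f_{\mathcal{J}}=e_{\mathcal{J}}-v_{\mathcal{J}}-k_{\mathcal{J}}+2C$ for each jacket, using $e_{\mathcal{J}}=2v_{\mathcal{J}}=2V$, and summing over the three jackets (each face occurring in exactly two of them) gives
\begin{gather*}
\delta=\frac{3}{2}V-F+3C .
\end{gather*}
In particular the degree is additive over connected components, and it reduces to \eqref{magica} when $C=1$. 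I would also note that, by Proposition~\ref{lem:substitute}, a chain-vertex may be treated for the purpose of this count as the chain of dipoles it abbreviates, so that $V$, $F$ and $C$ are unambiguous on MO-graphs with chain-vertices.

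With this identity in hand, each removal changes $V$ by a known fixed amount (two vertices per dipole, and the corresponding amount for a chain-vertex, as in Fig.~\ref{fig:deletion}), changes $C$ by $0$ in the non-separating case and by $+1$ in the separating case, while the only nontrivial input is the variation $\Delta F$ of the number of faces. Since
\begin{gather*}
\Delta\delta=\frac{3}{2}\Delta V-\Delta F+3\Delta C ,
\end{gather*}
the entire lemma becomes a computation of $\Delta F$ configuration by configuration.

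For the separating bullet I would argue by additivity. A separating dipole or chain-vertex, once deleted with the two detached legs reconnected on each side, produces exactly one extra component, and each reconnection takes place entirely inside one of the two resulting pieces. Thus $\delta(\mathcal{G}')$ equals the sum of the degrees of the two pieces, and one only has to check that the removed structure carries no extra degree, i.e.\ that the loss of the internal length-two face(s) is exactly compensated by the reconnections so that $\frac{3}{2}\Delta V+3\Delta C=\Delta F$, giving $\Delta\delta=0$; this is the asserted redistribution of the degree among the components.

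For the non-separating bullets the component count is unchanged, so $\Delta\delta=\frac{3}{2}\Delta V-\Delta F$, and I would obtain $\Delta F$ by following the three strand types $S$, $L$, $R$ separately through the removed region, reading off the strand configurations of Fig.~\ref{fig:chain_strands}. Removing the structure destroys its internal bigon(s) and then, for each strand type, joins the two strands on each side; whether those two strands belonged to the same face or to two distinct faces makes the reconnection a merge ($-1$), a split ($+1$), or neutral. For an unbroken chain-vertex of type $T$ (or a single dipole) only the $T$-strands carry bigons while the other two types thread through cleanly, which is exactly what lets $\Delta F$ take two values and yields the decrease $1$ or $3$, the two subcases being governed by the parity of the number of $S$-dipoles already signalled in Fig.~\ref{fig:chains}; for a broken chain-vertex all three strand types are disrupted in the same maximal way, forcing the single value of $\Delta F$ that gives the decrease $3$. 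The main obstacle is precisely this last face count: one must go through each strand configuration of Fig.~\ref{fig:chain_strands}, decide for every strand type whether the reconnection merges or splits faces, and verify that the broken case always lands on the value giving $\Delta\delta=-3$ while the unbroken and single-dipole cases land on the two admissible values giving $\Delta\delta\in\{-1,-3\}$. Everything else is routine substitution into the degree formula above, with Lemma~\ref{lem:remove_loop} serving as the prototype and consistency check for the simplest such move.
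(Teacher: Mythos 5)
Your bookkeeping reduction is sound, and it is in fact the very strategy the paper itself invokes (and defers to [FT] for): the component-sensitive degree formula $\delta=\tfrac{3}{2}V-F+3C$ is correct, additivity over components follows from it, Proposition~\ref{lem:substitute} lets you treat chain-vertices as the chains they abbreviate, and Lemma~\ref{lem:remove_loop} is indeed the prototype computation. So the skeleton matches the paper's ``carefully counting the modifications made to the number of faces, vertices and connected components.''

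The genuine gap is in the step you label ``the main obstacle,'' which is not a routine verification but the entire mathematical content of the lemma, and your plan for it would not succeed as described. (i) Your model---each of the three strand classes independently contributing a merge ($-1$), a split ($+1$), or a neutral ($0$) reconnection---produces a strictly larger set of outcomes than the lemma allows: for a non-separating dipole it leaves open $\Delta\delta\in\{0,-2,-4\}$ as well as $\{-1,-3\}$, and nothing in the proposal excludes the forbidden values; note that excluding them is exactly what must be \emph{proved}, since the lemma's force lies in the precise list of values. What is needed, and what [FT] actually establishes, is finer structural information: for a dipole or unbroken chain-vertex of type X, the reconnection re-pairs \emph{only} the X-class strands (the other two classes connect the two legs of one and the same side to each other, so they are unaffected by joining those legs), leaving a single $\pm 1$ ambiguity rather than three independent ones; and one must then rule out the neutral closure for that single re-pairing---for corner-type (L/R) configurations this follows from the coherent orientation of corner strands induced by the $\phi\to\bar\phi$ edge orientation (L- and R-faces are directed cycles, so the ``crossing'' closure would have to pair two inflowing strand ends), while the straight-strand case and the separating case (where separation itself forces the closure pattern, yielding exactly $\Delta\delta=0$) each need their own argument. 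For broken chain-vertices the value is forced to be exactly $-3$ because \emph{no} strand traverses the configuration at all (all strands U-turn within a side), so the count is purely local---not because all three types are ``disrupted in the same maximal way.'' (ii) One claim is concretely wrong: the $1$-versus-$3$ dichotomy is \emph{not} governed by the parity of the number of S-dipoles; that parity only determines the type of the chain-vertex in Fig.~\ref{fig:chains}. Which of the two admissible values occurs depends on how the re-paired strands close up in the rest of the graph (merge versus split), a global property of $\mathcal{G}$ that cannot be read off the removed configuration, so any purely local case analysis over Fig.~\ref{fig:chain_strands} is structurally incapable of deciding it.
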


In order to prove Lemma \ref{lem:bound1}, one needs to analyze the process of iterative removal of dipoles and chain vertices (f\/irst the non-separating ones and then the separating ones)~-- see again \cite{FT} for details. One can show that this process leads to a decorated tree of components.

An illustration of this process transforming a scheme into some decorated tree is given in Fig.~\ref{fig:from_scheme_to_tree}.
\begin{figure}\centering
\includegraphics[width=12.5cm]{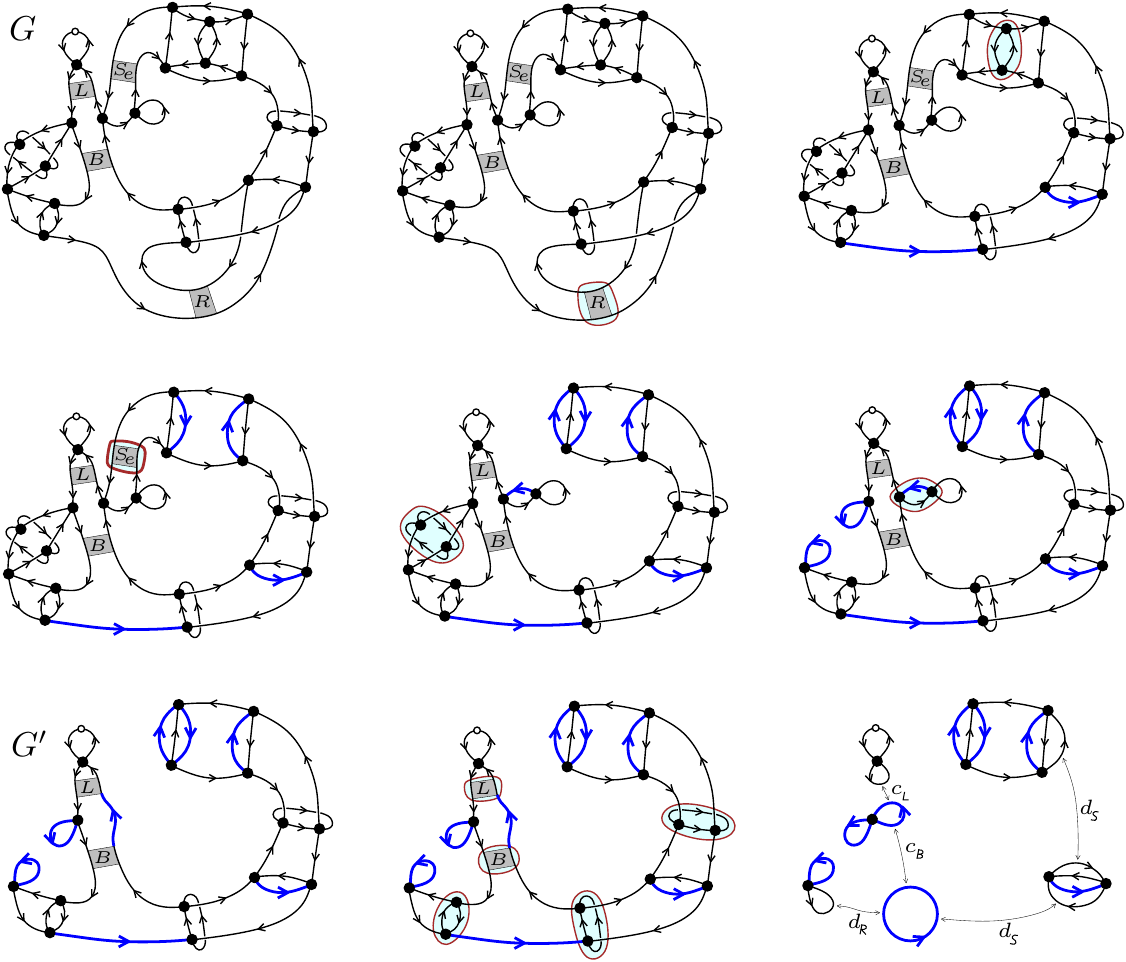}
\caption{An illustration of the process transforming a scheme into some decorated tree of components.}\label{fig:from_scheme_to_tree}
\end{figure}
The f\/irst drawing shows a reduced scheme ${\mathcal{G}}$. Then, iteratively, one removes
at each step a non-separating dipole or a non-separating chain-vertex (at each step the
non-separating dipole or chain-vertex to be removed next is surrounded). Let ${\mathcal{G}}'$ be the
MO-graph with chain-vertices thus obtained (where colored edges are drawn bolder).
As the last two drawings show,
the removal of uncolored dipoles and chain-vertices (which are all separating) of ${\mathcal{G}}'$ yields a~tree of components (a~tree edge is labelled $c_x$ if it comes from a chain-vertex of type~$x$ and is labelled~$d_x$ if it comes from an uncolored dipole of type~$x$).

\subsection{Generating functions, asymptotic enumeration and dominant schemes}

Let us start this subsection by recalling that the generating function of rooted melonic graphs writes:
\begin{gather}\label{gfmelon}
T\big(\lambda^2\big)=1+\lambda^2\big(T\big(\lambda^2\big)\big)^4
\end{gather}
(see Fig.~\ref{insertion}).

Let now $\cS_{\delta}$ be the (f\/inite) set of reduced schemes
of degree $\delta$. For each $S\in\cS_{\delta}$, let $G_S^{(\delta)}(u)$ be the generating function of
 rooted melon-free MO-graphs of reduced scheme~$S$.

Let $p$ be half the number of non-root standard vertices of $S$, $b$
the number of broken chain-vertices, $a$ the number of unbroken chain-vertices of type $L$ or $R$,
$s_{\rm e}$ the number of even straight chain-vertices, and $s_{\rm o}$ the number of odd straight chain-vertices.
The generating functions for
\begin{itemize}\itemsep=0pt
\item unbroken chains of type L (resp.~R) is $(\lambda^2)^2/(1-\lambda^2)$,
\item the one for even straight chains is $(\lambda^2)^2/(1-(\lambda^2)^2)$,
\item the one for odd straight chains is $(\lambda^2)^3/(1-(\lambda^2)^2)$,
\item and the one for broken chains is $(3\lambda^2)^2/(1-3(\lambda^2))-3(\lambda^2)^2/(1-\lambda^2)=6(\lambda^2)^2/((1-3\lambda^2)(1-\lambda^2))$.
\end{itemize}
Putting all of this together leads to
\begin{gather*}
G_S^{(\delta)}\big(\lambda^2\big)=\big(\lambda^2\big)^p\frac{(\lambda^2)^{2a}}{(1-\lambda^2)^a}\frac{(\lambda^2)^{2s_{\rm e}}}{(1-(\lambda^2)^2)^{s_{\rm e}}}\frac{(\lambda^2)^{3s_{\rm o}}}{(1-(\lambda^2)^2)^{s_{\rm o}}}\frac{6^b (\lambda^2)^{2b}}{(1-3\lambda^2)^b(1-\lambda^2)^b}.
\end{gather*}
Denoting by $c$ the total number of chain-vertices and by $s=s_{\rm e}+s_{\rm o}$ the total number of straight chain-vertices, this expression simplif\/ies to
\begin{gather*}
G_S^{(\delta)}\big(\lambda^2\big)=\frac{6^b (\lambda^2)^{p+2c+s_{\rm o}}}{(1-\lambda^2)^{c-s}(1-(\lambda^2)^2)^s(1-3\lambda^2)^b}.
\end{gather*}

Now, in order to take melons into considerations (see equation \eqref{gfmelon} above), recall that
a~rooted melon-free MO-graph with $2p$ non-root vertices has $4p+1$ edges (since
the root-edge is split into two edges, see the previous subsection) where one can insert a rooted
melonic subgraph.

Let us now def\/ine
\begin{gather*}
U\big(\lambda^2\big):=\lambda^2T\big(\lambda^2\big)^4=T\big(\lambda^2\big)-1.
\end{gather*}
 The generating function $F_S^{(\delta)}(\lambda^2)$ of rooted MO-graphs of reduced scheme $S$ is then given by
\begin{gather}\label{gfF}
F_S^{(\delta)}\big(\lambda^2\big)=T\big(\lambda^2\big)\frac{6^bU(\lambda^2)^{p+2c+s_{\rm o}}}{(1-U(\lambda^2))^{c-s}(1-U(\lambda^2)^2)^s(1-3U(\lambda^2))^b}.
\end{gather}
Finally,
the generating function $F^{(\delta)}(\lambda^2)$ of rooted MO-graphs of degree $\delta$ is simply
given by
\begin{gather*}
F^{(\delta)}\big(\lambda^2\big)=\sum_{S\in\cS_{\delta}}F_S^{(\delta)}\big(\lambda^2\big)
\end{gather*}
(see again \cite{FT} for details).

The melon generating function has its main singularity at
\begin{gather*}
\lambda_c^2=\frac{3^3}{2^8}.
\end{gather*}
Moreover, $T(\lambda^2_c)=\frac 43$, and
\begin{gather*}
\big(1-3U\big(\lambda^2\big)\big)^{-b}\sim_{\lambda\to \lambda_c}\big(2^{3/2}3^{-1/2}\big)^{-b}\big(1-\lambda^2/\lambda_c^2\big)^{-b/2}.
\end{gather*}
Therefore, using the expression \eqref{gfF}, one concludes that {\it the dominant schemes are those for which
the number of broken chains $b$ is maximized} (the larger $b$, the larger singularity order).

Using now an appropriate algorithm of iterative removal of broken chains (see again \cite{FT} for details), one obtains some tree of components. This allows to prove the following bound
\begin{gather*}
b\le 4\delta -1
\end{gather*}
on the number of broken chains.
If this bound is saturated then:
\begin{itemize}\itemsep=0pt
\item all broken chains are separating,
\item the component containing the root has degree zero,
\item all the components of positive degree and the component containing the root are leaves of the tree and the remaining components (of degree zero) have three neighbors in the tree,
\item {\it all positive degree components have degree $1/2$}.
\end{itemize}

One can then prove (see again \cite{FT} for details) the following correspondence between these dominant schemes and rooted binary trees:

\begin{Theorem}
The dominant schemes of degree $\delta$ arise from rooted binary trees with
\begin{itemize}\itemsep=0pt
\item $2\delta +1 $ leaves,
\item $2\delta - 1$ inner nodes,
\item $4\delta -1$ edges,
\end{itemize}
where
\begin{itemize}\itemsep=0pt
\item the root-leaf is occupied by the rooted cycle-graph,
\item the $2\delta$ leaves are occupied by a infinity graph,
\item the $2\delta -1$ inner nodes are occupied by the cycle graph or the quadruple edge graph,
\item the $4\delta -1$ edges are occupied by separating broken chain-vertices.
\end{itemize}
Moreover, each rooted binary tree with $2\delta +1 $ leaves yields $2^{6\delta -2}$ dominant schemes.
\end{Theorem}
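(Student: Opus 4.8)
The plan is to promote the structural characterisation of dominant schemes already established—namely that a dominant scheme saturates the bound $b=4\delta-1$, that all its broken chains are separating, and that in the associated tree of components the degree-$1/2$ and root components are leaves while the degree-zero components are trivalent inner nodes—into an explicit bijection with decorated rooted binary trees. First I would take a dominant scheme $S$ and apply the iterative broken-chain removal already used to prove $b\le 4\delta-1$: since all $4\delta-1$ broken chains are separating, each removal increases the number of connected components by one, and the incidence pattern of broken chains and components is exactly a tree $\mathcal{T}(S)$ whose edges are the broken chain-vertices and whose nodes are the components. Rooting $\mathcal{T}(S)$ at the unique component carrying the root of the MO-graph makes it a rooted tree, and the task reduces to pinning down the shape of $\mathcal{T}(S)$, identifying the decoration carried by each node and each edge, and then counting.

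For the shape and the cardinalities I would argue by degree bookkeeping alone. By Lemma~\ref{lem:removals} the degree is additive over separating removals, so the degrees of the components sum to $\delta$; since every positive-degree component has degree exactly $1/2$, there must be precisely $2\delta$ of them, and these together with the root component are the leaves, while every remaining degree-zero component has three neighbours. Writing $L$ for the number of leaves and $I$ for the number of trivalent inner nodes, the handshake identity for a tree gives $L+3I=2(L+I-1)$, hence $I=L-2$; with $L=2\delta+1$ this yields $I=2\delta-1$ inner nodes and $E=L+I-1=4\delta-1$ edges, matching the broken-chain count. Thus $\mathcal{T}(S)$ is a rooted binary tree with the stated numerology.

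The crux, and the step I expect to cost the most, is identifying the admissible node-decorations. The removal operation reconnects, on each side, the two half-edges detached from a broken chain-vertex (as in Fig.~\ref{fig:deletion}), so each component, once its incident broken chains are removed, closes up into a genuine MO-graph: a leaf closes using one reconnection and an inner node using three. I would enumerate the degree-$1/2$ MO-graphs that can arise from a single such reconnection and show that the only possibility is the infinity graph, and enumerate the degree-zero MO-graphs compatible with three reconnections and show that the only possibilities are the cycle graph and the quadruple-edge graph; the root component, being degree zero with a single reconnection and carrying the root, is forced to be the rooted cycle-graph. This is a finite case analysis on the strand structure of melon-free, chain-vertex-free cores, but it is delicate because one must rule out larger degree-zero cores and check that the orientation of strands is consistent through each reconnection.

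Finally I would count the decorations compatible with a fixed underlying rooted binary tree and verify that distinct decorations give distinct schemes. The root-leaf and the $2\delta$ infinity-graph leaves admit no free choice, whereas each of the $2\delta-1$ inner nodes may independently be a cycle graph or a quadruple-edge graph (a factor $2$), and each of the $4\delta-1$ broken chain-vertex edges admits exactly two inequivalent strand-gluings compatible with the separating structure (another factor $2$ per edge). Multiplying gives $2^{2\delta-1}\cdot 2^{4\delta-1}=2^{6\delta-2}$. To close the argument I would check that the reconstruction—glue the prescribed components along broken chain-vertices following $\mathcal{T}(S)$—is well defined and inverse to the decomposition, so that the map from decorated trees to dominant schemes is a bijection and the count $2^{6\delta-2}$ per tree is exact.
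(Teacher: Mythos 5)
Your overall route --- exploit the saturation structure of the bound $b\le 4\delta-1$ (all broken chains separating, the positive-degree components all of degree $1/2$ and, together with the root component, the leaves of the tree of components), derive the numerology $I=L-2$, $E=4\delta-1$ from the handshake identity, then identify the components and count decorations --- is exactly the route the paper sketches and attributes to~\cite{FT}, and your first two paragraphs are sound. The genuine gap is in your counting paragraph. You attribute $2^{6\delta-2}$ as: one configuration per leaf, a factor $2$ per inner node (cycle vs.\ quadruple-edge, one way each), and a factor $2$ per edge (``two inequivalent strand-gluings'' per broken chain-vertex). These sub-counts are wrong, and they reproduce the stated total only because $2^{(2\delta-1)+(4\delta-1)}=2^{6\delta-2}$. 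To see that they cannot be right, compare with the paper's own formula~\eqref{eq:domi}: for $r=1$ the sum of the generating functions of all dominant schemes arising from one tree is $T\big(\lambda^2\big)\,\big[2^2U\big(\lambda^2\big)\big]^{\delta}\,\big[1+3U\big(\lambda^2\big)\big]^{2\delta-1}\,\bigl(6U^2/[(1-U)(1-3U)]\bigr)^{4\delta-1}$, while a single dominant scheme with $q$ quadruple-edge inner nodes contributes, by~\eqref{gfF}, $T\,U^{\delta+q}\,\bigl(6U^2/[(1-U)(1-3U)]\bigr)^{4\delta-1}$, since it has $p=\delta+q$. Matching coefficients of $U^{q}$ forces the number of dominant schemes per tree with $q$ quadruple-edge nodes to be $2^{2\delta}\binom{2\delta-1}{q}3^{q}$: three inequivalent attachments for each quadruple-edge inner node against one for a cycle inner node, a binary choice at each of the $2\delta$ infinity-graph leaves, and a \emph{unique} gluing per broken chain-vertex. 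Your counting gives $2^{4\delta-1}\binom{2\delta-1}{q}$ instead; already for $\delta=1$ you predict a split $8+8$ between cycle-node and quad-node schemes where the correct split is $4+12$. Since schemes with different $q$ have different vertex numbers, your version would yield a generating function, and an asymptotic constant in~\eqref{eq:asympt_degree}, contradicting the paper; the correct total $2^{2\delta}\cdot 4^{2\delta-1}=2^{6\delta-2}$ is reached by a different factorization than yours.

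A second, smaller gap is in the identification step, which you rightly call the crux. The cores are not ``melon-free, chain-vertex-free'' objects: the quadruple-edge graph \emph{is} the elementary vacuum melon, and a priori the degree-zero candidates are all melonic vacuum graphs (an infinite family), the degree-$1/2$ candidates all infinity graphs dressed by melons. What eliminates the larger candidates is not melon-freeness of the closed-up core but the fact that chain-vertices stand for \emph{maximal} chains: whenever the two stubs created by cutting an edge of the core lie on the same side of a dipole of the core, that dipole must be absorbed into the adjacent chain-vertex, and it is this absorption (together with the MO sign rule that edges join a `$+$' to a `$-$', which forbids closing a straight-type side by a loop) that collapses any larger melonic core onto the cycle graph or the quadruple-edge graph. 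One can write down melon-free, proper-chain-free configurations with bigger degree-zero pieces which are ruled out only by this maximality requirement, so melon-freeness plus ``strand consistency'' as you invoke them do not by themselves make your case analysis finite or complete.
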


In Fig.~\ref{fig:dominant}, one has an illustration of this correspondence for the case $\delta=2$.
\begin{figure}[h!]
\centering
\includegraphics[width=8cm]{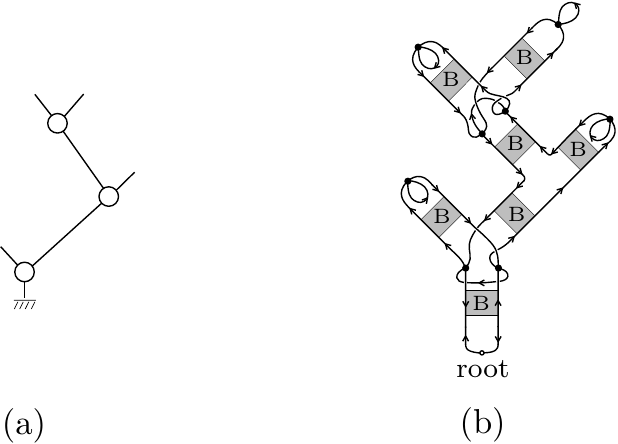}
\caption{(a) A rooted binary tree with $5$ leaves. (b) One of the $2^{10}$ dominant schemes arising from it.}\label{fig:dominant}
\end{figure}

Using Corollary VI.1 of \cite{fs}, one has:

\begin{Theorem}
\label{prop:asympt}
For $\delta$ and $n$ in $\tfrac12\mathbb{Z}_+$, let $a_n^{(\delta)}$ be the number of rooted MO-graphs with $2n$ vertices and degree $\delta$. Then, $\delta$ being fixed, for $n\in \delta+\mathbb{Z}$ $($and $\Gamma(\cdot)$ denoting the Euler gamma function$)$, one has
\begin{gather}\label{eq:asympt_degree}
a_n^{(\delta)}\sim \mathrm{Cat}_{2\delta-1}\cdot\frac{3^{\delta-3/2}}{2^{2\delta-5/2}}\cdot\frac{n^{2\delta-3/2}}{\Gamma(2\delta-1/2)}\cdot\big(2^8/3^3\big)^n\qquad \mathrm{as}\quad n\to\infty,
\end{gather}
and $a_{n+1/2}^{\delta}=O\big(a_{n}^{\delta}/\sqrt{n}\big)$ as $n\to\infty$.
\end{Theorem}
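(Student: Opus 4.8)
The plan is to read off the asymptotics of $a_n^{(\delta)}$ from the singularity structure of the generating function $F^{(\delta)}(\lambda^2)=\sum_{S\in\cS_{\delta}}F_S^{(\delta)}(\lambda^2)$ of \eqref{gfF}. Writing $u=\lambda^2$, so that $a_n^{(\delta)}=[u^n]F^{(\delta)}(u)$, the whole argument reduces to locating the dominant singularity of $F^{(\delta)}$, determining its type, and computing the associated multiplicative constant; the transfer theorem (Corollary~VI.1 of~\cite{fs}) then turns this analytic information into the coefficient asymptotics. Once the singular exponent and the constant are known, the exponential factor $(2^8/3^3)^n$ and the polynomial factor $n^{2\delta-3/2}/\Gamma(2\delta-1/2)$ come out mechanically.

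First I would locate the singularity. Since $T$ has its unique dominant singularity at $u_c=3^3/2^8$ with $T(u_c)=4/3$, one has $U(u_c)=T(u_c)-1=1/3$, and every factor occurring in \eqref{gfF} is finite at $u_c$ except $(1-3U)^{-b}$, which is the sole source of singular behaviour because $1-3U(u_c)=0$. Using the local expansion $(1-3U(u))^{-b}\sim(2^{3/2}3^{-1/2})^{-b}(1-u/u_c)^{-b/2}$ recalled above, the scheme $S$ contributes a singular term of order $(1-u/u_c)^{-b/2}$, so the leading singularity is governed by the schemes with the largest number of broken chains (all other schemes give weaker singularities). By the bound $b\le 4\delta-1$ and its saturation, the dominant schemes are exactly those with $b=4\delta-1$, for which the singular exponent is $-(4\delta-1)/2=-(2\delta-1/2)$. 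Hence $F^{(\delta)}(u)\sim C\,(1-u/u_c)^{-(2\delta-1/2)}$ for a constant $C$ to be determined, and since $2\delta-1/2\notin\mathbb{Z}_{\le 0}$ the transfer theorem yields $[u^n]F^{(\delta)}\sim C\,n^{2\delta-3/2}\,u_c^{-n}/\Gamma(2\delta-1/2)$, already reproducing the claimed powers of $n$ and the growth rate $u_c^{-n}=(2^8/3^3)^n$.

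The delicate step, which I expect to be the main obstacle, is the evaluation of $C=\sum_{S\text{ dominant}}\big(\text{leading singular coefficient of }F_S^{(\delta)}\big)$. For a dominant scheme all chain-vertices are broken, so the factors $(1-U)^{-(c-s)}$ and $(1-U^2)^{-s}$ simplify and may be evaluated at $U(u_c)=1/3$, together with $T(u_c)=4/3$; what remains is to sum the quantity $6^b\,U(u_c)^{p+2b}\,(1-U(u_c))^{-b}\,(2^{3/2}3^{-1/2})^{-b}$ over all dominant schemes. Here the structural classification of dominant schemes via rooted binary trees is indispensable: it supplies the number $\mathrm{Cat}_{2\delta-1}$ of underlying trees, the multiplicity $2^{6\delta-2}$ of schemes per tree, and the precise way in which the exponent $p$ (hence the power of $U(u_c)$) is distributed over the inner nodes, each occupied either by a cycle-graph or by a quadruple-edge graph. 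Carrying out this weighted sum is a finite but intricate bookkeeping of powers of $2$ and $3$; done correctly it collapses to $C=\mathrm{Cat}_{2\delta-1}\,3^{\delta-3/2}/2^{2\delta-5/2}$, the prefactor in \eqref{eq:asympt_degree}. I would verify this constant on the smallest case $\delta=1/2$ (a single broken chain, $\mathrm{Cat}_0=1$, two dominant schemes), where it can be checked by hand, before trusting the general count.

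Finally, the statement $a_{n+1/2}^{\delta}=O(a_n^{\delta}/\sqrt n)$ rests on a parity constraint rather than on new analytic input. Reducing \eqref{magica} modulo $2$ gives $V\equiv 2\delta\pmod 2$ for every connected MO-graph of degree $\delta$, so such a graph can have $2n$ vertices only if $n-\delta\in\mathbb{Z}$; equivalently, all exponents of $u$ occurring in $F^{(\delta)}$ lie in $\delta+\mathbb{Z}$, whence $[u^n]F^{(\delta)}$ vanishes unless $n\in\delta+\mathbb{Z}$. Consequently, for $n\in\delta+\mathbb{Z}$ the shifted index $n+1/2$ lies in the complementary residue class and $a_{n+1/2}^{\delta}=0$, which is a fortiori $O(a_n^{\delta}/\sqrt n)$; and even without invoking the exact vanishing, any contribution from the class $\delta+1/2+\mathbb{Z}$ could only come from the next singular term of $F^{(\delta)}$, whose exponent exceeds the leading one by $1/2$ and which therefore produces coefficients smaller by a factor $n^{-1/2}$. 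This completes the plan.
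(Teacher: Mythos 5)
Your proposal follows essentially the same route as the paper (which itself delegates the details to~\cite{FT}): singularity analysis of the scheme generating functions~\eqref{gfF} at $\lambda_c^2=3^3/2^8$, identification of the dominant schemes as those saturating $b=4\delta-1$ together with their binary-tree classification, and the transfer theorem (Corollary~VI.1 of~\cite{fs}) to convert the singular exponent $-(2\delta-1/2)$ into the stated $n^{2\delta-3/2}\big(2^8/3^3\big)^n$ asymptotics; your parity argument for the last claim is also sound, since~\eqref{magica} forces $V\equiv 2\delta \pmod 2$. The one step you leave unexecuted, namely the weighted sum over dominant schemes collapsing to the constant $\mathrm{Cat}_{2\delta-1}\,3^{\delta-3/2}/2^{2\delta-5/2}$, is exactly the bookkeeping the review also omits, so the proposal matches the paper's proof in both structure and substance.
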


Recall that one can def\/ine a tadpole erasing mechanism which keeps the degree constant (see above). Using this mechanism for the $2\delta$ tadpoles of a dominant scheme, one can prove the following planarity result:

\begin{Theorem}
Each rooted melon-free MO-graph whose reduced scheme is dominant is planar.
\end{Theorem}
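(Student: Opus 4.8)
The plan is to recast planarity in its defining form and then prove it by a genus-additivity argument over the tree that underlies a dominant scheme. Recall that an MO-graph is planar precisely when its \emph{canonical} jacket $\mathcal{J}_S$, obtained by erasing the straight ($S$) strands, has vanishing genus; write $g_S(\mathcal{G})$ for this canonical genus. By the structural characterization of dominant schemes established above, a rooted melon-free $\mathcal{G}$ with dominant reduced scheme is obtained by expanding a scheme supported on a rooted binary tree whose blocks are the rooted cycle-graph at the root-leaf, an infinity graph at each of the $2\delta$ leaves, a cycle-graph or a quadruple-edge graph at each of the $2\delta-1$ inner nodes, and a \emph{separating broken chain-vertex} at each of the $4\delta-1$ tree edges. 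Since expanding each broken chain-vertex into an honest broken chain of dipoles through the canonical substitution preserves both the strand structure and the degree (Proposition~\ref{lem:substitute}), it suffices to control $g_S$ on this assembled graph, and I would do so by showing that $g_S$ is additive over the tree and that every block contributes $0$.

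First I would verify that each individual block has vanishing canonical genus. The root cycle-graph, the inner cycle-graphs, and the quadruple-edge graphs all have degree $0$, so all three of their jackets are orientable of genus $0$; in particular $g_S=0$. The only block carrying degree is the infinity graph (degree $\tfrac12$, cf.\ Theorem~\ref{thm:main} and the double tadpole of Fig.~\ref{planartadtwistsun}), and here I would use the tadpole-erasing mechanism recalled above. An infinity graph is one vertex with two tadpoles, and in a dominant scheme exactly one of these tadpoles is opened to carry the incident broken chain while the other survives as a genuine tadpole; this is what produces the $2\delta$ tadpoles of the scheme. I would check that the surviving tadpole is a straight self-loop, hence invisible to $\mathcal{J}_S$, so that erasing it leaves the canonical jacket unchanged while exhibiting the infinity block as planar. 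Concretely, of the three jackets of an infinity graph two are orientable of genus $0$ and the third is non-orientable of genus $1$, and this non-orientable jacket is an $L$- or $R$-jacket, not $\mathcal{J}_S$; thus $g_S=0$ for the infinity block, its single unit of non-orientable genus residing entirely in the $L/R$-sector.

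Next I would establish that the canonical genus is additive along the tree. Each tree edge is a \emph{separating} broken chain-vertex, so by Lemma~\ref{lem:removals} its removal splits $\mathcal{G}$ into two components and merely distributes the degree. The sharper fact I need is that $\mathcal{J}_S$ itself splits as a one-point (bridge) junction at such a removal, so that $g_S$ is exactly the sum of the canonical genera of the two pieces. I would obtain this by reading the strand configuration of a broken chain from Fig.~\ref{fig:chain_strands}: in the $(L,R)$-jacket a broken chain acts as a planar connector gluing the two sides at a cut vertex, contributing nothing to $g_S$ while depositing all of its non-orientability into the erased $S$-sector and the $L/R$ faces. Iterating this splitting over all $4\delta-1$ tree edges decomposes $\mathcal{J}_S$ into the canonical jackets of the individual blocks, whence $g_S(\mathcal{G})$ equals the sum of the blocks' canonical genera, which is $0$ by the previous paragraph. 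Therefore $g_S(\mathcal{G})=0$ and $\mathcal{G}$ is planar.

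I expect the main obstacle to be the additivity step: one must verify rigorously, via the strand bookkeeping of Fig.~\ref{fig:chain_strands}, that a \emph{broken} chain-vertex behaves as a genuine bridge in the canonical jacket and never as a handle — equivalently, that all the non-orientable genus forced by $\delta>0$ is absorbed by the $L$- and $R$-jackets while none of it leaks into $\mathcal{J}_S$. A secondary but necessary point is the compatibility of the tadpole-erasing mechanism with the canonical jacket, namely that the surviving infinity-graph tadpole is straight and hence invisible to $\mathcal{J}_S$; this is exactly what makes the infinity blocks planar in spite of their positive degree, and it is the hinge on which the whole additivity bookkeeping turns.
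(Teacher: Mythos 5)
Your strategy (planarity equals vanishing canonical genus; check the blocks of the binary-tree structure of a dominant scheme; propagate along the tree) is genuinely different from the paper's proof, and its skeleton could be made rigorous; however, two of the claims you lean on are wrong as stated, and one of them is exactly where you place the weight of the argument. First, the tadpole claim: in an MO graph the signs alternate around a vertex and an edge joins a `$+$' to a `$-$', so a self-loop necessarily connects two \emph{adjacent} half-edges, and the corner it closes is an L- or R-corner. Hence every tadpole closes a face of length one of type L or R; there is no straight (S-type) tadpole at all. Moreover, erasing the S strands deletes strands, never edges, so no edge --- tadpole or otherwise --- is ``invisible'' in $\mathcal{J}_S$. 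The correct statement is that a tadpole appears in the canonical jacket as a loop bounding a face of length one, so erasing it changes $(V,E,F)$ by $(-1,-2,-1)$ and leaves the canonical genus unchanged, while lowering the degree by $1/2$ (Lemma~\ref{lem:remove_loop}). Your parallel justification for the infinity block (its single unit of non-orientable genus lives in a jacket containing S strands, not in $\mathcal{J}_S$) is correct, so the block verification survives; but the ``straight, hence invisible'' statement that you call the hinge of the bookkeeping must be discarded.

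Second, and more seriously, the additivity step. A separating chain-vertex does not attach the two components at a cut vertex: it attaches them by \emph{two} edges, two legs on each side. Genus additivity over one-point unions is formal; over two-edge attachments it is not --- if the two legs on one side are interleaved in the rotation at their vertex with other edges, the attachment creates a handle (two planar pieces joined by two edges can perfectly well yield a torus). This is not a hypothetical worry here: the side-legs of a straight dipole sit at \emph{opposite} corners of its two vertices, and configurations containing S-dipoles stay planar only because the rest of the graph nests inside the bigon rather than beside it. So your additivity lemma is true, but proving it requires precisely the strand-by-strand analysis of Fig.~\ref{fig:chain_strands} that you defer; and note that Lemma~\ref{lem:removals} cannot be invoked in its place, since it distributes the \emph{degree}, whereas your blocks (the infinity graphs) have vanishing canonical genus but nonvanishing degree. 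The paper avoids any gluing lemma altogether: it erases the $2\delta$ tadpoles of a dominant scheme, each erasure lowering the degree by $1/2$ while preserving the canonical genus, so one lands on a degree-zero graph; degree zero forces all three jackets to be spheres, hence planarity, which then survives re-insertion of the tadpoles, substitution of chain-vertices by chains (the canonical substitution of Proposition~\ref{lem:substitute} preserves the strand structure), and melonic insertions. That reduction to degree zero is what lets the paper bypass the two-edge gluing problem your proposal still has to solve; what your route would buy, if completed, is a sharper structural statement (jacket-wise additivity of genus over the tree decomposition) of independent interest.
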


An illustration of this planarity result is shown in Fig.~\ref{fig:planar_redraw}
\begin{figure}[h!]
\centering
\includegraphics[width=8cm]{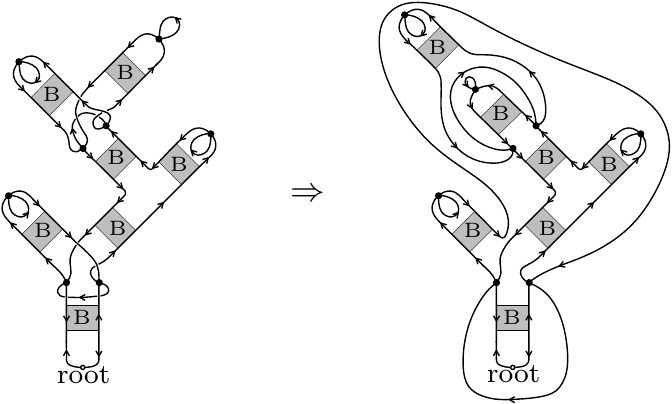}
\caption{Planar redrawing of the scheme of Fig.~\ref{fig:dominant}.}\label{fig:planar_redraw}
\end{figure}

Moreover, one has:

\begin{Corollary}
\label{coro:almost_sure}
For each fixed $\delta\in\tfrac12\mathbb{Z}_+$ and for $n\in\delta+\mathbb{Z}$, the probability that a rooted MO-graph of degree $\delta$ with $2n$ vertices is planar tends to $1$ as $n\to\infty$.
\end{Corollary}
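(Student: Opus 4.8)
The plan is to write the target probability as a ratio of enumeration coefficients and to show that the planar graphs already carry the full leading asymptotics. Write $a_n^{(\delta)}$ for the total number of rooted MO-graphs of degree $\delta$ with $2n$ vertices (nonzero precisely when $n\in\delta+\mathbb{Z}$), and let $P_n$ be the number of those that are planar. The probability in question is exactly $P_n/a_n^{(\delta)}\le 1$, so it suffices to exhibit a lower bound for $P_n$ that is asymptotically equivalent to $a_n^{(\delta)}$.

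First I would decompose $a_n^{(\delta)}$ according to the reduced scheme. By Theorem~\ref{prop:finite} the set $\cS_\delta$ is finite, so $F^{(\delta)}(\lambda^2)=\sum_{S\in\cS_\delta}F_S^{(\delta)}(\lambda^2)$ is a finite sum, and I would apply singularity analysis (Corollary~VI.1 of \cite{fs}) to each summand separately. From the closed form~\eqref{gfF}, every $F_S^{(\delta)}$ has its dominant singularity at $\lambda_c^2=3^3/2^8$, and near it the only denominator factor that vanishes is $1-3U(\lambda^2)$, so that $(1-3U)^{-b}$ behaves like a constant times $(1-\lambda^2/\lambda_c^2)^{-b/2}$, as recorded above. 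The transfer theorem then yields $[\lambda^{2n}]F_S^{(\delta)}\sim C_S\,n^{b(S)/2-1}\,(2^8/3^3)^n$, so the exponential rate is the same for all schemes of degree $\delta$, and the polynomial order is governed solely by the number of broken chains $b(S)$.

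Next I would use the classification recalled before Theorem~\ref{prop:asympt}: the bound $b\le 4\delta-1$ holds for every scheme and is saturated exactly by the dominant schemes, while each non-dominant scheme has $b\le 4\delta-2$. Hence a dominant scheme contributes at order $n^{2\delta-3/2}(2^8/3^3)^n$, whereas each non-dominant scheme contributes at order at most $n^{2\delta-2}(2^8/3^3)^n$, smaller by a factor $n^{-1/2}$. Since $\cS_\delta$ is finite, summing the finitely many sub-dominant contributions keeps them $o(a_n^{(\delta)})$, so the graphs whose reduced scheme is dominant already realize the leading asymptotics: writing $a_n^{\mathrm{dom}}$ for their number, $a_n^{\mathrm{dom}}\sim a_n^{(\delta)}$. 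Finally, the planarity theorem stated above says that every rooted melon-free MO-graph with dominant reduced scheme is planar; since inserting a melon at an edge is a planarity-preserving local operation, the same holds after arbitrary melonic insertions, and therefore every rooted MO-graph of dominant reduced scheme is planar. This gives $P_n\ge a_n^{\mathrm{dom}}$, whence $1\ge P_n/a_n^{(\delta)}\ge a_n^{\mathrm{dom}}/a_n^{(\delta)}\to 1$.

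The main obstacle is the clean separation of singular orders. I must verify that the full local expansion of $F_S^{(\delta)}$ near $\lambda_c^2$ --- which also carries the square-root singularities of $T$ and $U$ through the factors $(1-U)^{-(c-s)}$, $(1-U^2)^{-s}$ and the prefactor $T\,U^{p+2c+s_{\rm o}}$ --- does not accidentally promote a sub-dominant scheme to the dominant order. The key observation is that each of these factors is analytic plus a square-root term, so multiplying them into $(1-\lambda^2/\lambda_c^2)^{-b/2}$ can only lower the singular exponent (each extra $\sqrt{1-\lambda^2/\lambda_c^2}$ decreases it by $1/2$); thus $b(S)$ genuinely controls the leading exponent. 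Pinning this down, together with the minor lemma that melonic insertions preserve planarity, completes the argument; the ratio estimate itself is then routine.
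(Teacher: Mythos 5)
Your proposal is correct and follows essentially the same route as the paper: the paper's one-line proof likewise combines the planarity theorem for dominant schemes with the two facts that melonic edge-substitution preserves planarity and that dominant-scheme graphs carry the full leading asymptotics, the latter being exactly the scheme-by-scheme singularity analysis behind Theorem~\ref{prop:asympt} as carried out in~\cite{FT}. Your write-up merely makes that singularity comparison explicit (one small imprecision: for a scheme with $b=0$ the coefficient order is $n^{-3/2}$ rather than your stated $C_S\,n^{b/2-1}=C_S\,n^{-1}$, but since you only need upper bounds on non-dominant schemes this does not affect the argument).
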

This is a direct consequence of the fact that edge-substitution by melonic components preserves planarity, since it is these graphs which dominate the asymptotic expansion.

In the planar case, rooted MO-graphs correspond (bijectively) to rooted 4-regular maps and the straight faces of the MO-graph identify to the knot-components
of the map, and the number of straight faces is equal to
\begin{gather*}
V/2+1-\delta,
\end{gather*}
with $V$ the number of vertices and $\delta$ the degree. Let us recall that straight faces are the closed circuits given by the straight stands of our tensor graphs.

One can then prove:
\begin{Proposition}
For $n\in\tfrac12\mathbb{Z}_+$ and $k\in\mathbb{Z}_+$, let $b_{n}^{(k)}$ be the number of rooted $4$-regular planar maps with $2n$ vertices and $k$ knot-components. Then $b_{n}^{(k)}=0$ for $k>n+1$. In addition, for each fixed $\delta\in\tfrac12\mathbb{Z}_+$, and for $n\in\delta+\mathbb{Z}$, $b_{n}^{(n+1-\delta)}$ has the same asymptotic estimate as $a_n^{(\delta)}$, given by~\eqref{eq:asympt_degree}.
\end{Proposition}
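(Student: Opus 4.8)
The plan is to reduce both assertions to facts already established in this section: the bijection between rooted planar MO-graphs and rooted $4$-regular planar maps, under which the straight faces of the MO-graph become the knot-components of the map; the straight-face count $V/2+1-\delta$; and the almost-sure planarity of Corollary~\ref{coro:almost_sure}. First I would set up the resulting parameter dictionary. Under the bijection a rooted $4$-regular planar map with $2n$ vertices and $k$ knot-components corresponds to a unique rooted planar MO-graph with $2n$ vertices and exactly $k$ straight faces; by the straight-face formula $k=n+1-\delta$, so this MO-graph has degree $\delta=n+1-k$. Hence $b_n^{(k)}$ equals the number of planar rooted MO-graphs with $2n$ vertices and degree $n+1-k$, and in particular $b_n^{(n+1-\delta)}$ counts the planar rooted MO-graphs with $2n$ vertices and degree $\delta$. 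This also accounts for the hypothesis $n\in\delta+\mathbb{Z}$, which is exactly what makes $n+1-\delta$ a nonnegative integer.

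For the vanishing statement I would use that the degree of a connected MO-graph is a nonnegative half-integer (it includes the value $0$ for melonic graphs). Since a map with $k$ knot-components corresponds to an MO-graph of degree $\delta=n+1-k$, the mere existence of such a map forces $n+1-k\ge 0$, i.e.\ $k\le n+1$. Therefore no rooted $4$-regular planar map with $2n$ vertices has more than $n+1$ knot-components, and $b_n^{(k)}=0$ whenever $k>n+1$.

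For the asymptotic statement, write $A_n^{(\delta)}$ for the number of \emph{planar} rooted MO-graphs with $2n$ vertices and degree $\delta$. The dictionary above gives the exact identity $b_n^{(n+1-\delta)}=A_n^{(\delta)}$. Corollary~\ref{coro:almost_sure} states that, at fixed $\delta$ and for $n\in\delta+\mathbb{Z}$, the fraction $A_n^{(\delta)}/a_n^{(\delta)}$ tends to $1$ as $n\to\infty$. Combining the identity with this convergence yields $b_n^{(n+1-\delta)}\sim a_n^{(\delta)}$, and the right-hand side already carries the estimate~\eqref{eq:asympt_degree} by Theorem~\ref{prop:asympt}; this is precisely the claimed asymptotic.

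The only substantive input is the almost-sure planarity of Corollary~\ref{coro:almost_sure}, which itself rests on the dominant schemes being planar together with the fact that melonic edge-substitutions preserve planarity: this is what guarantees that the planar subfamily captures the full leading asymptotics rather than a polynomially smaller share. Everything else is bookkeeping, so the point I would check most carefully is that the bijection is genuinely parameter-preserving in the form needed --- that it respects the root, keeps the vertex count $2n$, and matches straight faces to knot-components one-to-one --- so that the identity $b_n^{(n+1-\delta)}=A_n^{(\delta)}$ holds exactly and the asymptotic transfer through Corollary~\ref{coro:almost_sure} is legitimate.
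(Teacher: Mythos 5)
Your proposal is correct and follows essentially the same route as the paper (and the source article \cite{FT}): the bijection between rooted planar MO-graphs and rooted $4$-regular planar maps with straight faces matching knot-components, the count $V/2+1-\delta$ of straight faces forcing $k=n+1-\delta\le n+1$, and the transfer of the asymptotics of Theorem~\ref{prop:asympt} through the almost-sure planarity of Corollary~\ref{coro:almost_sure}. Your closing caveat about checking that the correspondence preserves the root, the vertex count and the face/knot identification is exactly the content the paper establishes just before stating the proposition, so nothing is missing.
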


\section{The double scaling limit}

Using the results of the previous section, we implement here the double scaling limit of the MO model. This follows the original article~\cite{GTY}.

The contribution to the $2r$-functions are given by the dominant schemes with~$r$ root edges (see~\cite{GTY}). Recall here that all the chains of these schemes are broken chains.

Generalizing the results of the previous section, one can show (see again \cite{GTY} for details) that the set of dominant schemes with degree $\delta$ and $r$ roots, $ {\cal S}^{\text{dom}}_{\delta, r}$, consists in binary trees with~$r$ univalent root vertices and another $2\delta$ univalent vertices. Such trees have $2\delta + r -2$ three valent internal vertices and $ 4\delta + 2r - 3$ edges. The leading singular contribution to the $2r$-point function is then
\begin{gather}
 \mathfrak{K}^{(1)\text{sing}}_{2r} = N^{3(1-r)} \sum_{\delta \in \NN / 2} N^{ - \delta }
\sum_{S \in {\cal S}^{\text{dom}}_{\delta, r} }
T\big(\lambda^2\big)^r \big[2^2 U\big(\lambda^2\big) \big]^{\delta} \big[1+3 U\big(\lambda^2\big)\big]^{2\delta+r-2} \nonumber\\
\hphantom{\mathfrak{K}^{(1)\text{sing}}_{2r} =}{} \times \left( \frac{
 6 [ U(\lambda^2)]^{ 2 } } { [1-U(\lambda^2) ] [1-3U(\lambda^2)] } \right)^{4\delta + 2r-3} ,\label{eq:domi}
\end{gather}
where for the two point function ($r=1$) one needs to add the contribution $ T(\lambda^2)$ of the degenerate dominant scheme consisting
in a unique root vertex.

In the {\it double scaling limit} one compensates the $1/N$ suppression of the higher order terms in the series in equation~\eqref{eq:domi}
by the enhancement at criticality due to the $ [1-3U(\lambda^2)]^{-1} $ factors. This is achieved by sending at the same time $N$ to inf\/inity and
$\lambda$ to criticality while keeping the double scaling parameter
\begin{gather*}
N^{\frac{1}{2}} \left( 1 - \frac{\lambda^2}{\lambda_c^2}\right) \equiv \kappa^{-1} ,
\end{gather*}
f\/ixed. In the double scaling regime we get
\begin{gather*}
T\big(\lambda^2\big) \sim \frac{4}{3} \left( 1 - \sqrt{\frac{1}{6\kappa\sqrt{N}} }\right) ,\qquad
1 - 3U\big(\lambda^2\big) \sim \sqrt{\frac{8}{3\kappa\sqrt{N}}}.
\end{gather*}

\subsection{The two-point function}

The dominant schemes are rooted binary trees which are counted by the Catalan numbers: there are ${\rm Cat}_{n-1} $ such trees with $n-1$ three valent vertices, hence with $n$ non root univalent vertices. The degree of the scheme is $ \delta = n/2$ and taking into account the contribution of
the degenerate scheme consisting in a unique root vertex we obtain
\begin{gather*}
 \mathfrak{K}^{\text{sing}}_{2} = T\big(\lambda^2\big) + \sum_{n\ge 1} {\rm Cat}_{n-1} \frac{1}{N^{ \frac{n}{2} } } T\big(\lambda^2\big) \big[2^2 U\big(\lambda^2\big) \big]^{\frac{n}{2}} \big[1+3 U\big(\lambda^2\big) \big]^{n-1} \\
\hphantom{\mathfrak{K}^{\text{sing}}_{2} =}{} \times \left( \frac{
 6 [ U(\lambda^2)]^{ 2 } } { [1-U(\lambda^2) ] [1-3U(\lambda^2)] } \right)^{2n -1 } \\
\hphantom{\mathfrak{K}^{\text{sing}}_{2} }{} = T\big(\lambda^2\big) \left( 1 + \frac{1}{N^{1/2}}\frac{12\cdot U(\lambda^2)^{5/2} }{ [1-U(\lambda^2)][1-3U(\lambda^2)] } \right. \\
\hphantom{\mathfrak{K}^{\text{sing}}_{2} =}{} \times \left.\sum_{n\geq 0} {\rm Cat}_n
\left( \frac{1}{N^{1/2}}\frac{ 72\cdot U(\lambda^2)^{9/2} [1+3U(\lambda^2)] }{ [ 1-U(\lambda^2) ]^2 [ 1-3U(\lambda)]^2} \right)^n \right) ,
\end{gather*}
which in the double scaling limit becomes
\begin{gather*}
 \mathfrak{K}^{\rm DS}_2 = \frac{4}{3}\left( 1 - \sqrt{\frac{1}{6\kappa\sqrt{N}} }\right) + \frac{4}{3} \frac{1}{N^{\frac{1}{4}} } \sqrt{ \frac{\kappa}{2} } \sum_{n\geq 0} {\rm Cat}_n \left( \kappa \frac{\sqrt{3}}{2}\right)^n
 = \frac{4}{3} \left( 1 - \frac{ \sqrt{1-2\sqrt{3} \kappa} }{ N^{\frac{1}{4} } \sqrt{6 \kappa} }\right),
\end{gather*}
where the sum over $n$ converges for $\kappa<2^{-1} 3^{-\frac{1}{2}}$.

\subsection{The four-point function}

The dominant schemes are binary trees with two roots. There are again ${\rm Cat}_{n-1} $ such trees with $n-1$ three valent vertices,
but this time they have $n-1$ non root univalent vertices and degree $\delta = \frac{n-1}{2}$.
We obtain
\begin{gather*}
\mathfrak{K}^{(1)\text{sing}}_{4} = N^{-3} \sum_{n\ge 1} {\rm Cat}_{n-1} \frac{1}{N^{ \frac{n-1}{2} } } T\big(\lambda^2\big)^2 \big[2^2 U\big(\lambda^2\big) \big]^{\frac{n-1}{2}} \big[1+3 U\big(\lambda^2\big) \big]^{n-1} \\
\hphantom{\mathfrak{K}^{(1)\text{sing}}_{4} =}{}
\times \left( \frac{
 6 [ U(\lambda^2)]^{ 2 } } { [1-U(\lambda^2) ] [1-3U(\lambda^2)] } \right)^{2(n-1) + 1} \\
\hphantom{\mathfrak{K}^{(1)\text{sing}}_{4}}{} = N^{-3} T\big(\lambda^2\big)^2 \left( \frac{
 6 [ U(\lambda^2)]^{ 2 } } { [1-U(\lambda^2) ] [1-3U(\lambda^2)] } \right) \\
\hphantom{\mathfrak{K}^{(1)\text{sing}}_{4} =}{} \times \sum_{n\geq 0} {\rm Cat}_n
\left( \frac{1}{N^{1/2}}\frac{ 72\cdot U(\lambda^2)^{9/2} [1+3U(\lambda^2)] }{ [ 1-U(\lambda^2) ]^2 [ 1-3U(\lambda)]^2} \right)^n .
\end{gather*}

In the double scaling limit this becomes
\begin{gather*}
 \mathfrak{K}^{(1){\rm DS}}_{4} = N^{-3 + \frac{1}{4} } \sqrt{\kappa} \frac{8 \sqrt{3}}{ 9\sqrt{2} } \left( \frac{1 - \sqrt{1-2\sqrt{3}\kappa} }{ \sqrt{3} \kappa } \right) .
\end{gather*}

Note that $\mathfrak{K}^{(1){\rm DS}}_{4}$ is \emph{enhanced} by a factor $N^{\frac{1}{4}}$ with respect to the natural $N$ scaling
of $\mathfrak{K}^{(1)}_{4} $. This is a consequence of the fact that the singularity of
the generating function of dominant schemes boosts this four point function in double scaling.

\subsection[The $2r$-point function]{The $\boldsymbol{2r}$-point function}

The dominant schemes are binary trees with $r$ roots and the double scaling limit of $\mathfrak{K}^{(1)}_{2r}$ is
\begin{gather*}
 \mathfrak{K}^{(1){\rm DS}}_{2r} \sim N^{3(1-r)} N^{\frac{1}{4}(2r-3)} f_{2r}(\kappa),
\end{gather*}
for some function $f_{2r}$ depending only on the double scaling parameter~$\kappa$. As it was the case for the four-point function, the higher point functions
are also boosted in double scaling with respect to their natural scaling in~$N$.

Let us emphasize on the fact that the functions $f_{2r}(\kappa) $ are convergent for
$2\sqrt{3}\kappa < 1 $ and exhibit a square root singularity at the critical double scaling coupling $ \kappa_c= 2^{-1} 3^{-\frac{1}{2}}$.

\section{Concluding remarks and perspectives}

We have reviewed in this paper the def\/inition of the MO random tensor model and several QFT results (such as the large $N$ expansion and the double scaling limit) related to this. From a~combinatorial point of view, the dominant schemes at any order of the $1/N$ expansion have been identif\/ied and carefully studied (their shapes being shown to be naturally associated to rooted binary trees).

A f\/irst perspective for future work appears to us to be the study from a probabilistic point of view of these MO dominant schemes. More concretely, it would be interesting to investigate to what phases these recently discovered conf\/igurations correspond. Let us recall that melon graphs correspond, from this point of view, to branched polymers \cite{GR-AIHP}.

A second, and maybe immediate, perspective for future work is the study of renormalizability of this type of tensor models, in the spirit of \cite{BGR}.
Obviously, one can also use non-trivial holonomy group data, thus making the contact with the original GFT framework, in the spirit of~\cite{teza-Sylvain} or~\cite{Sylvain-AIHPD}. The drawback would of course be that one would need to add an {\it ad-hoc} propagator in the quadratic part of the MO action.

A distinct line of research is then given by the extension of the MO framework to classes of tensor models taking into consideration larger and larger classes of Feynman graphs. As already emphasized here, with respect to the celebrated colored-like tensor models, the MO model enlarges the class of Feynman graphs to be considered, but there is no reason for whom this model would be the one leading to the largest possible class of Feynman graphs.

On a more general basis, leaving aside the particular class of tensor models one chooses to work with, it appears to us that tensor models are now at a crucial point of their development. Thus, a f\/irst set of QFT questions have been answered (the implementation of the large $N$ and double scaling limit mechanisms) for both the colored-like and MO frameworks.

Nevertheless, crucial questions remain open at this point. From the physics perspective, maybe the most important ones are the one of the continuous limit of random tensor models and of the relation of this models to quantum gravity. Ideally, this would be obtained in the same way it is obtained in the 2D case, where the double-scaling mechanism allows to access the regime of f\/inite, non-vanishing Newton constant (since the large $N$ limit corresponds to the vanishing limit of the Newton constant, while the limit $\lambda\to\lambda_c$ corresponds to the large volume limit, see again the review~\cite{DiFrancesco:1993nw} or the papers \cite{dm1, dm2, dm3}).

From the combinatorics and mathematics perspective, an important result would be a ge\-ne\-ra\-lization of Schaef\/fer~\cite{sch} or BDFG bijections~\cite{bdfg}, generalization which should allow to access information on the geodesic. Let us recall that, at~2D, the particular interest of these bijections it is exactly the fact that they allow to obtain geodesic information, information which is not accessible through standard QFT techniques.

Another perspective is on our opinion the possibility of using tensor integrals to obtain coun\-ting theorems for maps in three (or higher) dimensions. This would be of particular importance for the f\/ield of combinatorics, where no such counting results are not known through standard combinatorial techniques.

Nevertheless, all these targets appear to us as particularly dif\/f\/icult to attend, since quantum gravity, geometry or topology in dimension higher than two are very much involved, both from a~conceptual and from a technical point of view. Several attempts of mastering these concepts have been made so far, using various angles of attacks, in mathematics or physics. It thus remains for tensor models to obtain progress in these important directions of research in the future.

\subsection*{Acknowledgements}
The author is partially supported by the grants ANR JCJC CombPhysMat2Tens and PN 09 37 01 02.

\newpage

\pdfbookmark[1]{References}{ref}
\LastPageEnding

\end{document}